\documentclass[runningheads]{llncs}
\usepackage[utf8]{inputenc}
\usepackage{amssymb}
\usepackage{listings}
\usepackage{amsfonts}
\usepackage{float}
\usepackage{amsmath,latexsym}
\usepackage{graphicx}
\usepackage{fancyvrb}
\usepackage{authblk}
\usepackage{paralist}
\usepackage{makecell}
\usepackage{comment}
\usepackage{cite}
\DeclareMathOperator{\lcm}{lcm}
\usepackage[table,xcdraw]{xcolor}
\newif\ifanonymous
\anonymousfalse  

\usepackage{xcolor}
\usepackage{tikz-cd}

\usepackage{xcolor}
\definecolor{linkcolor}{rgb}{0.65,0,0}
\definecolor{citecolor}{rgb}{0,0.4,0}
\definecolor{urlcolor}{rgb}{0,0,0.65}
\usepackage[colorlinks=true, linkcolor=linkcolor, urlcolor=urlcolor, citecolor=citecolor]{hyperref}
\definecolor{darkblue}{RGB}{0,0,160}
\definecolor{darkdarkred}{RGB}{180,0,0}
\definecolor{darkgreen}{RGB}{0,140,0}
\newcommand{\FF}{\mathbb{F}}
\newcommand{\FFt}{\mathbb{F}_2}
\newcommand{\FFq}{\mathbb{F}_q}

\newcommand{\K}{\mathbb{K}}

\newcommand{\vs}{\mathbf{s}}

\newcommand{\vc}{\mathbf{c}}
\newcommand{\ve}{\mathbf{e}}

\newcommand{\vv}{\mathbf{v}}

\newcommand{\vx}{\mathbf{x}}
\newcommand{\vy}{\mathbf{y}}

\newcommand{\vz}{\mathbf{z}}
\newcommand{\vH}{\mathbf{H}}

\newcommand{\htop}{{\mathrm{top}}}

\newtheorem{modeling}{Modeling}
\newtheorem{notation}{Notation}

\newcommand{\Cf}{\mathbf{C}_f}
\newcommand{\HH}{\mathbf{H}}

\newcommand{\CC}{\mathcal{C}}
\newcommand{\OO}{\mathcal{O}}
\newcommand{\GG}{\mathcal{G}}

\newcommand{\supp}{\mathsf{supp}}

\newcommand{\rk}{\mathsf{rk}}

\newcommand{\wt}{\mathsf{wt}}
\newcommand{\lm}{\mathsf{lm}}

\newcommand{\dreg}[1]{d_{\mathrm{reg}}(#1)}
\newcommand{\pr}{{\mathbb{P}}}
\newcommand{\ord}{\mathsf{ord}}


\begin{document}

 \title{Quadratic Modelings of Syndrome Decoding} 

  \author{Alessio Caminata \inst{1} \and
  Ryann Cartor \inst{2}\and
  Alessio Meneghetti \inst{3}\and 
  Rocco Mora \inst{4} \and
    Alex Pellegrini \inst{5}}
  \authorrunning{A. Caminata et al.}

 \institute{Universit\`a di Genova
 \and
 Clemson University
 \and
 Universit\`a di Trento 
 \and CISPA Helmholtz Center for Information Security 
 \and Eindhoven University of Technology
 }

    \maketitle
    \begin{abstract} 
        This paper presents enhanced reductions of the bounded-weight and exact-weight Syndrome Decoding Problem (SDP) to a system of quadratic equations. Over $\FFt$, we improve on a previous work and study the degree of regularity of the modeling of the exact weight SDP.  Additionally, we introduce a novel technique that transforms SDP instances over $\FF_q$ into systems of polynomial equations and thoroughly investigate the dimension of their varieties. Experimental results are provided to evaluate the complexity of solving SDP instances using our models through Gr\"obner bases techniques.
        
        \keywords{Syndrome Decoding \and Gr\"obner Basis \and Cryptanalysis \and Code-Based Cryptography \and Multivariate Cryptography}
    \end{abstract}

    \section{Introduction}\label{sec:intro}

As widespread quantum computing becomes closer to reality, accurate cryptanalysis of post-quantum cryptosystems is of the utmost importance. Code-based cryptography is one of the main areas of focus in the search for quantum-secure cryptosystems.
This is well represented by the NIST Post-Quantum Standardization Process, where as many as three finalists, namely Classic McEliece \cite{bernstein2017classic} 
 (an IND-CCA2 secure variation of McEliece's very first code-based scheme \cite{mceliece1978public}), HQC \cite{melchor2018hamming} and BIKE \cite{aragon2022bike}, belong to this family. Similarly, NIST's additional call for digital signatures has numerous proposals that make use of linear codes. Many of the proposed schemes are based on the hardness of (sometimes structured variants of) the syndrome decoding problem. 

The parameters of many code-based schemes are carefully chosen to align with the latest advancements with respect to this computational problem. Despite decades of intensive research in this direction, all the algorithms developed so far exhibit exponential complexity. This is not surprising, since the problem has been shown to be NP-hard \cite{berlekamp1978inherent}. In particular, after more than 60 years of investigation since the groundbreaking paper of Prange \cite{DBLP:journals/tit/Prange62}, the reduction in the exponent for most parameters of interest has been minimal \cite{stern1989method, D89, finiasz2009security, bernstein2011smaller, may2011decoding, becker2012decoding, may2015computing, both2018decoding}. All the works mentioned fall into the family of Information Set Decoding (ISD) algorithms, whose basic observation is that it is easier to guess error-free positions, and guessing enough of them is sufficient to decode. 
This resistance to ISD algorithms makes the syndrome decoding problem a reliable foundation for code-based cryptosystems.

     To comprehensively assess security, it is imperative to consider attacks stemming from various other realms of post-quantum cryptography. For instance, attacks typically associated with multivariate or lattice-based schemes should also be taken into account for code-based schemes, when applicable. A remarkable example is offered by dual attacks, originally introduced in lattice-based cryptography, where, however, they have been strongly questioned. In contrast, their code-based counterpart \cite{carrier2022statistical, carrier2024reduction} has recently outperformed ISD techniques for a non-negligible regime of parameters, by reducing the decoding problem to the closely related Learning Parity with Noise problem.
     Concerning polynomial system solving strategies, another notable illustration of this is the algebraic MinRank attack, which broke the rank-metric code-based schemes RQC and Rollo \cite{bardet2020algebraic, DBLP:conf/asiacrypt/BardetBCGPSTV20} and now represents the state-of-the-art for MinRank cryptanalysis, beating combinatorial approaches.

    In the Hamming metric, a reduction that transforms an instance of the syndrome decoding problem into a system of quadratic equations over $\mathbb{F}_2$ was introduced in \cite{2021/meneghetti}.
    The most expensive step of the transformation, in terms of numbers of new variables and new equations introduced, is the so-called \textit{Hamming-weight computation encoding}. Indeed, for a binary linear code of length $n$, the procedure dominates the overall complexity of the reduction with a complexity of $\mathcal{O}(n\log_2(n)^2)$.

 Despite the considerable theoretical interest in this transformation, the latter is too inefficient to be of practical interest in solving the syndrome decoding problem. Thus, the problem of improving the reduction in order to obtain a more effectively solvable system remains open. Moreover, \cite{2021/meneghetti} covers only the binary case, leaving unanswered the challenge of modeling through algebraic equations the decoding problem for codes defined over finite fields with more than two elements.

\paragraph{Our contribution.}

In this work, we improve on the reduction presented in \cite{2021/meneghetti} by a factor of \(\log_2(n)\), thereby reducing the number of introduced variables and equations and achieving an overall reduction cost of \(\mathcal{O}(n\log_2(n))\). This improvement is achieved by leveraging the recursive structure of the equations generated by the Hamming-weight computation encoding and by transforming the equations similarly to the reduction procedure in Buchberger's algorithm \cite{1965/buchberger} for Gröbner basis computation. When considering a version of the syndrome decoding problem that requires an error vector with a specified Hamming weight, we derive a further improved modeling, for which we study the degree of regularity.

As a second contribution, we present a novel approach that transforms an instance of the syndrome decoding problem over \(\mathbb{F}_{q}\) for \(q \geq 2\) into a system of polynomial equations. This significantly broadens the applicability of our methods to a wider range of code-based cryptosystems. A common feature of our algebraic modelings is that if the decoding problem admits multiple solutions, the Gröbner basis naturally determines all of them.

We also provide theoretical and experimental data to analyze the complexity of solving syndrome decoding instances using our modelings, demonstrating that, at least for small parameters, our new strategy is practical and successful. Software (MAGMA scripts) supporting this work can be found \href{https://github.com/rexos/phd-cryptography-code/tree/main/modelings}{here}.

\paragraph{Structure of the paper.}
The next section recalls the background and notions necessary for this work. In Section~\ref{sec:mps}, we review the reduction described in \cite{2021/meneghetti} from the syndrome decoding problem to that of finding the zeroes of a set of polynomials. In Section~\ref{sec:EWM}, we describe two modelings that improve upon \cite{2021/meneghetti}. We study the degree of regularity of the modeling for the exact weight syndrome decoding problem, along with experimental results, in Section~\ref{sec:complexity-analysis}. Finally, in Section~\ref{sec:Fq}, we present a novel modeling of the syndrome decoding problem over $\mathbb{F}_{q}$ with $q \geq 2$, for which we provide a theoretical study of the variety and experimental analysis of the solving complexity with Gr\"obner bases techniques.

    \section{Preliminaries} \label{sec:prelim}
    
This paper investigates the reduction of the Syndrome Decoding Problem (SDP) into a Polynomial System Solving Problem (PoSSo). In this section, we briefly recall the definitions of both problems, as well as the notions of solving degree and degree of regularity, which are commonly used to estimate the computational complexity of the PoSSo problem.
  
    \subsection{The Syndrome Decoding Problem}
    An $[n,k]$-linear code $\mathcal{C}$ is a $k$-dimensional subspace of $\FF_q^n$. We call $n$ the length of the code, and $k$ its dimension. An element $\mathbf{x}\in\FF_q^n$ is called a codeword if $\mathbf{x}\in\mathcal{C}$. The number of nonzero entries in $\mathbf{x}$ is called the Hamming weight of $\mathbf{x}$ and we denote it as $\wt(\mathbf{x})$. Given a code $\mathcal{C}$ we define a parity check matrix of $\mathcal{C}$ as $\mathbf{H}\in\FF_q^{(n-k)\times n}$ such that the right kernel of $\mathbf{H}$ is the code $\mathcal{C}$. The subspace spanned by the rows of $\HH$ is called the dual code of $\mathcal{C}$. 
     Many code-based cryptosystems rely on the hardness of solving the Syndrome Decoding Problem (SDP), see Problems~\ref{BSDP} and~\ref{EWSDP} described below.

    \begin{problem}[SDP: Syndrome Decoding Problem]\label{BSDP}
    Given integers $n,k,t$ such that $k\leq n$ and $t\leq n$, an instance of the problem SD$(\HH,\mathbf{s},t)$ consists of a parity check matrix $\mathbf{H}\in\FF_q^{(n-k)\times n}$ and a vector $\mathbf{s}\in\FF_q^{n-k}$ (called the syndrome). A solution to the problem is a vector $\mathbf{e}\in \mathbb{F}_q^n$ such that $\mathbf{He}^\top=\mathbf{s}^\top$ and $\wt(\mathbf{e})\leq t$.
    \end{problem} 

    \noindent In later sections, we will also refer to Problem~\ref{BSDP} as the ``Bounded Syndrome Decoding" Problem. We will also consider the following variant of SDP. 

    \begin{problem}[ESDP: Exact Weight Syndrome Decoding Problem]\label{EWSDP}
     Given integers $n,k,t$ such that $k\leq n$ and $t\leq n$, an instance of the problem ESD$(\HH,\mathbf{s},t)$ consists of a parity check matrix $\mathbf{H}\in\FF_q^{(n-k)\times n}$ and a vector $\mathbf{s}\in\FF_q^{n-k}$ (called the syndrome). A solution to the problem is a vector $\mathbf{e}\in \mathbb{F}_q^n$ such that $\mathbf{He}^\top=\mathbf{s}^\top$ and $\wt(\mathbf{e})= t$.
    \end{problem}

    Additionally, a close variant of the Syndrome Decoding Problem is the \textit{Codeword Finding Problem}, where the syndrome $\vs$ is the zero vector ${\mathbf{0}}$. Since the null vector is always a solution of the parity-check equations $\mathbf{He}^\top=\mathbf{0}^\top$, a nonzero $\ve$ of weight at most (or exactly) $t$ is sought. The name of the problem refers to the fact that any element in the right kernel of $\mathbf{H}$ belongs to the code $\mathcal{C}$ having $\HH$ as parity-check matrix. We will later need to distinguish this variant in the analysis of one of our modelings.

In addition to length and dimension, a fundamental notion in coding theory and consequently in code-based cryptography is the minimum distance $d$ of an $\FF_q$-linear code, i.e. the Hamming weight of the smallest nonzero codeword in the code. Such a quantity is strictly related to the number of solutions to the syndrome decoding problem.

Knowing the expected number of solutions from given parameters is extremely important in cryptography, in order to assess the security correctly.
It is guaranteed that the problem does not admit more than one solution as long as the number of errors is upper bounded by $\frac{d-1}{2}$. However, in practice, much better can be done for randomly generated codes. Indeed, it turns out that random codes achieve the so-called Gilbert-Varshamov (GV) distance $d_{GV}$, defined as the largest integer such that
\[
\sum_{i=0}^{d_{GV}-1} \binom{n}{i}(q-1)^i \le q^{n-k}.
\]

It can be shown that, as long as the number of errors is below the Gilbert-Varshamov distance, the Syndrome Decoding problem \textit{typically} has a unique solution. Moreover, the instances where the number of errors attains the GV distance are those supposed to be the most difficult.

    \subsection{The Polynomial System Solving Problem}
The Polynomial System Solving Problem (PoSSo) is the following. We define it over a finite field $\FF_q$, athough it can be more generally considered over any field.
 \begin{problem}[PoSSo: Polynomial System Solving]\label{PoSSo}
 Given integers $N,r\geq2$, an instance of the PoSSo problem consists of a system of  polynomials $\mathcal{F}=\{f_1,\dots,f_r\}$ in $R=\FF_q[x_1,\dots,x_N]$ with $N$ variables and coefficients in $\FF_q$. A solution to the problem is a vector $\mathbf{a}\in\FF_q^N$ such that $f_1(\mathbf{a})=\cdots=f_r(\mathbf{a})=0$.
 \end{problem}

\begin{remark}A special case of PoSSo when $\deg(f_i)=2$ for $1\leq i\leq r$  is called MQ (Multivariate Quadratic) and is the basis for multivaritate cryptography.
\end{remark}
The following outlines a standard strategy for finding the solutions of a polynomial system $\mathcal{F}$ by means of Gr\"obner bases. 
\begin{compactenum}
\item Find a degree reverse lexicographic ($\mathsf{degrevlex}$) Gr\"obner basis of the ideal $\langle\mathcal{F}\rangle$;
\item Convert the obtained $\mathsf{degrevlex}$ Gr\"obner basis into a lexicographic ($\mathsf{lex}$) Gr\"obner basis, where the solutions of the system can be easily read from the ideal in this form.
\end{compactenum}
The second step can be done by FGLM \cite{FGLM93}, or a similar algorithm, whose complexity depends on the degree of the ideal. This is usually faster than the first step, especially when the system $\mathcal{F}$ has few solutions. Therefore, we focus on the first step. 

The fastest known algorithms to compute a $\mathsf{degrevlex}$ Gr\"obner basis are the linear algebra based algorithms such as F4 \cite{faugereF4}, F5 \cite{F5paper}, or XL \cite{XL00}. These transform the problem of computing a Gr\"obner basis into one or more instances of Gaussian elimination of the  Macaulay matrices. The complexity of these algorithms is dominated by the Gaussian elimination on the largest Macaulay matrix encountered during the process. The size of a Macaulay matrix depends on the degrees of the input polynomials $f_1,\dots,f_r$, on the number of variables $N$, and on a degree $d$. In a nutshell, the \emph{Macaulay matrix} $M_{\leq d}$ of degree $d$ of $\mathcal{F}$ has columns indexed by the monic monomials of degree $\leq d$, sorted in decreasing order from left to right (with respect to the chosen $\mathsf{degrevlex}$ term order). The rows of $M_{\leq d}$ are indexed by the polynomials $m_{i,j}f_j$, where $m_{i,j}$ is a monic monomial such that $\deg(m_{i,j}f_j)\leq d$.  The entry $(i,j)$ of $M_{\leq d}$ is the coefficient of the monomial of column $j$ in the polynomial corresponding to the $i$-th row.

The \emph{solving degree} of $\mathcal{F}$ is defined as the least degree $d$ such that Gaussian elimination on the Macaulay matrix $M_{\leq d}$ produces a $\mathsf{degrevlex}$ Gr\"obner basis of $\mathcal{F}$. We denote the solving degree of $\mathcal{F}$ by $d_{\mathrm{sol}}(\mathcal{F})$. We have to compute Macaulay matrices up to degree $d_{\mathrm{sol}}=d_{\mathrm{sol}}(\mathcal{F})$, and the largest one we encounter has $a=\sum_{i=1}^r{{N+d_{\mathrm{sol}}-d_i}\choose{d_{\mathrm{sol}}-d_i}}$ many rows and $b={{N+d_{\mathrm{sol}}}\choose{d_{\mathrm{sol}}}}$ many columns, where $d_i=\deg f_i$. Therefore, taking into account the complexity of Gaussian elimination of this matrix,  an upper bound on the complexity of solving the system $\mathcal{F}$ with this method is
\begin{equation}\label{eq:GBcomplexity}
\OO\left({{N+d_{\mathrm{sol}}}\choose{d_{\mathrm{sol}}}}^\omega\right),    
\end{equation}
with $2\leq\omega\leq3$.

\begin{remark}
If $\mathcal{F}$ is not homogeneous, Gaussian elimination on $M_{\leq d}$ may produce a row corresponding to a polynomial $f$ with $\deg f<d$, where the leading term of $f$ was not the leading term of any row in $M_{\leq d}$. Some algorithms, for example $F4$, address this by adding rows for polynomials $mf$ ($\deg(mf)\leq d$) for some monomial $m$ and recomputing the reduced row echelon form.
If no Gr\"obner basis is found in degree $\leq d$, they proceed to higher degrees, potentially enlarging the span of $M_{\leq d}$ and reducing the solving degree. Throughout this paper, we consider only the case where no extra rows are added. Note that the solving degree as defined above is an upper bound on the degree at which algorithms using this variation terminate.
\end{remark}

Since the solving degree of a polynomial system may be difficult to estimate, several invariants related to the solving degree (that are hopefully easier to compute) have been introduced. One of the most important is the \emph{degree of regularity} introduced by Bardet, Faug\`ere, and Salvy \cite{bardet2004complexity}. We briefly recall its definition and connection with the solving degree.

Let $\langle\mathcal{F}^{\mathrm{top}}\rangle=\langle f_1^{\mathrm{top}},\dots,f_r^{\mathrm{top}}\rangle$ be the ideal of the polynomial ring $R$ generated by the homogeneous part of highest degree of the polynomial system $\mathcal{F}$. Assume that $\langle\mathcal{F}^{\mathrm{top}}\rangle_d=R_d$ for $d\gg0$.
The \emph{degree of regularity} of $\mathcal{F}$ is
\begin{equation*}
\dreg{\mathcal{F}}=\min\{d\in\mathbb{N}\mid \langle\mathcal{F}^{\mathrm{top}}\rangle_e=R_e \ \forall e\geq d\}.
\end{equation*} 

The degree of regularity can be read off from the Hilbert series of $\langle\mathcal{F}^{\mathrm{top}}\rangle$.
Let $I$ be a homogeneous ideal of $R$, and let $A=R/I$.
For an integer $d\geq 0$, we denote by $A_d$ the homogeneous component of degree $d$ of $A$.
The function $\mathrm{HF}_A(-):\mathbb{N}\rightarrow\mathbb{N}$, $\mathrm{HF}_A(d)=\dim_{\FF_q}A_d$ is called \emph{Hilbert function} of $A$.

The generating series of $\mathrm{HF}_A$ is called \emph{Hilbert series} of $A$. We denote it by $\mathrm{HS}_A(z)=\sum_{d\in\mathbb{N}}\mathrm{HF}_A(d)z^d$.
 
\begin{remark}\label{rem:polyHS}
Under the assumption that  $\langle\mathcal{F}^{\mathrm{top}}\rangle_d=R_d$ for $d\gg0$,  the Hilbert series of $A=R/\langle\mathcal{F}^{\mathrm{top}}\rangle$ is a polynomial. Then, the degree of regularity of $\mathcal{F}$ is given by $\dreg{\mathcal{F}}=\deg \mathrm{HS}_A(z)+1$ (see \cite[Theorem~12]{2021/caminatagorla}).    
\end{remark}

\noindent Under suitable assumptions, the degree of regularity provides an upper bound for the solving degree \cite{CaminataG23, 2023/salizzoni, Semaev2021651}. Moreover, it is often assumed that the two values are close. Although this occurs in many relevant situations, there are examples where these two invariants can be arbitrarily far apart (see \cite{2021/caminatagorla, 2013/dingschmidt, Bigdeli202175}).  We will see in Section~\ref{sec:dreg-EWM} that the degree of regularity of the system presented in Section~\ref{subsec:f2ESD} seems to yield a much higher value than the solving degree achieved during the Gr\"obner basis algorithm.

\section{The MPS Modeling}
\label{sec:mps}
This section is devoted to an overview of the algebraic modeling of the syndrome decoding problem proposed in~\cite{2021/meneghetti} (referred to as the MPS modeling). We fix the following notation for this section.

\begin{notation}\label{MPSnotation}
    Let $n\ge 2$ and let $\CC \subseteq \FF_2^n$ be a $[n,k,d]$-linear code having a parity check matrix $\HH \in \FF_2^{(n-k) \times n}$. We define $\ell = \lfloor \log_2(n) \rfloor + 1$. Let $\vs \in \FF_2^{n-k}$ play the role of the syndrome and let $0\le t \le \lfloor (d-1)/2 \rfloor$ be the target error weight. 
    Let $X = \left(x_1,\ldots,x_n\right)$ and $Y=(Y_1,\dots,Y_n)$ with $Y_j=(y_{j,1}, \dots, y_{j,\ell})$ be two sets of variables and we consider the polynomial ring $\FF_2[X,Y]$. 
    
\end{notation}

We define the following maps $\pi_i$ for $i=1,\ldots,n$,

\begin{align*}
    \pi_i : \FFt^{n} &\rightarrow \FFt^i \\
    (v_1,\ldots,v_n) &\mapsto (v_1,\ldots,v_i).
\end{align*}

The construction of the proposed algebraic modeling consists of four steps and uses the variables contained in $X$ and $Y$ to express relations and dependencies. Each of these steps produces a set of polynomials in $\FF_2[X,Y]$. An extra step of the construction reduces the aforementioned polynomials to quadratic polynomials.

The idea is to construct an algebraic system having a variety containing elements $(\vx \mid \vy_1 \mid \cdots \mid \vy_n)\in \FFt^{n(\ell + 1)}$ whose first $n$ entries represent an element $\vx$ of $\FFt^n$ such that $\HH\vx^\top = \vs^\top$. The remaining $n\ell$ entries are considered to be the concatenation of $n$ elements $\vy_i \in \FFt^{\ell}$ where the elements of $\vy_i$ represent the binary expansion of $\wt(\pi_i(\vx))$ for every $i=1,\ldots,n$, with $\pi_i(\vx)=(x_1,\dots,x_i)$. By this definition, the list $\vy_n$ represents the binary expansion of $\wt(\vx)$. The system finally enforces that $\vy_n$ represents the binary expansion of an integer $t^\prime$ such that $t^\prime \le t$.
The elements of the variety of solutions of this algebraic modeling are finally projected onto their first $n$ coordinates, revealing the solutions to the original syndrome decoding problem.

Here is a description of the four steps of reduction of the MPS modeling. We describe the set obtained in each step as a set of polynomials in $\FFt[X,Y]$.

\begin{itemize}
    \item \textit{Parity check encoding.} This step ensures that the solution of the algebraic system satisfies the parity check equations imposed by the parity check matrix $\HH$ and the syndrome vector $\vs$. Here, we compute the set of $n-k$ linear polynomials
    \begin{equation}\label{eq:pce}
        \left\{\sum_{i=1}^n h_{i,j}x_i + s_j \mid j\in\{1,\ldots,n-k\}\right\}.
    \end{equation}
    \item \textit{Hamming weight computation encoding.} This part of the modeling provides a set of polynomials that describes the binary encoding of $\wt(\pi_i(\vx))$ for every $i=1,\ldots,n$ described above. The set of polynomials achieving this goal, is given by the union of the three following sets consisting of the $\ell+n-1$ polynomials in the sets
    \begin{equation}
	\begin{split}\label{eq:lineareqs}
        &\left\{ f_{1,1}=x_1 + y_{1,1}, f_{1,2}=y_{1,2}, \ldots, f_{1,\ell}=y_{1,\ell} \right\},\\ &\left\{f_{i,1}=x_i + y_{i, 1} + y_{i-1,1} \mid i=2,\ldots,n \right\}
    \end{split}
    \end{equation}
	and the $(n-1)(\ell -1)$ polynomials
    \begin{equation}\label{eq:othereqs}
            \left\{ f_{i,j}=\left(\prod_{h=1}^{j-1}y_{i-1, h}\right)x_i + y_{i,j} + y_{i-1,j} \mid i=2,\ldots,n,\ j=2,\ldots,\ell \right\}.
    \end{equation}
    We labeled the polynomials of the sets in~\eqref{eq:lineareqs} and in~\eqref{eq:othereqs} because the improvements in the next sections will mainly involve them.
    \item \textit{Weight constraint encoding.} This part produces a set consisting of a single polynomial that enforces the constraint $\wt(\vx) \le t$ by dealing with the variables in $Y_n$. Let $\vv \in \FFt^\ell$ represent the binary expansion of $t$. Consider the $\ell$ polynomials in $\FFt[X,Y]$ defined as
    $$f_j = (y_{n, j} +v_j)\prod_{h=j+1}^\ell (y_{n, h} + v_h + 1) $$
    for $j=1,\ldots,\ell$.
    The set is the singleton 
    \begin{equation}\label{eq:MPSwce}
        \left\{ \sum_{j=1}^\ell (v_j + 1)f_j \right\}.  
    \end{equation}
    
    \item \textit{Finite field equations.} The set of $n + n\ell$ finite field polynomials of $\FFt[X,Y]$ is
    \begin{equation} \label{eq:ffe}
        \left\{x_i^2- x_i \mid i=1,\ldots,n\right\} \cup \left\{y_{i,j}^2- y_{i,j} \mid i=1,\ldots,n,\ j=1,\ldots,\ell\right\},
    \end{equation}
    and ensures that the elements of the variety are restricted to elements of $\FFt^{n(\ell + 1)}$.
\end{itemize}

The algebraic system corresponding to an instance of the syndrome decoding problem is then the union of the four sets described above.
Clearly, this is not a quadratic system; thus the authors apply a linearization strategy that introduces a number of auxiliary variables used to label monomials of degree $2$. This eventually results in a large quadratic system in many more than just $n(\ell + 1)$ variables. In fact, the final quadratic system ends up having equations and variables bounded by $\OO(n\log_2(n)^2)$.

\section{Improving the MPS Modeling}\label{sec:EWM}
In this section, we provide improvements of the MPS modeling that reduce the number of equations and variables in the final algebraic system. We keep the same notation as in Notation~\ref{MPSnotation}.
First, we consider the case of the syndrome decoding problem, i.e. with a bounded weight error. We then consider the case of the exact weight syndrome decoding problem. We observe that one can avoid the linearization step as the resulting system is already quadratic.

\subsection{Improved Modeling for the Case of SDP}\label{subsec:f2SD}
We consider the $\mathsf{degrevlex}$ monomial ordering on $\FFt[X,Y]$ with the $X$ variables greater than the $Y$ variables, and denote by $\lm(p)$ the leading monomial  of a polynomial $p$. Notice that since we are in the binary case, the notions of leading monomial and that of leading term coincide.

Denote by $F = \{f_{i,j} \mid i=1,\ldots,n,\ j=1,\ldots,\ell\} \subset \FFt[X,Y]$ the set of polynomials of cardinality $n\ell$ given by \eqref{eq:lineareqs} and \eqref{eq:othereqs} 
for a code of length $n$. 
We aim at building a set $G=\{g_{i,j} \mid i=1,\ldots,n,\ j=1,\ldots,\ell\}\subset \FFt[X,Y]$ consisting of polynomials of degree at most $2$  such that $\langle G \rangle = \langle F \rangle$. Denote with $F[i,j]$ the polynomial $f_{i,j}$, similarly for $G$. We first give a description of the set $G$ and then formally describe the new modeling.    

Construct $G$ as follows:
\begin{itemize}
    \item Put $G[1,1] = x_1 + y_{1,1}$ and $G[1,h] = y_{1,h}$ for $h = 2,\ldots, \ell$;
    \item Set $G[i,1] = F[i,1] = x_i + y_{i, 1} + y_{i-1,1}$ for every $i = 2,\ldots,n$;
    \item Compute
\begin{align*}
    G[i,j] &= F[i,j] + y_{i-1, j-1}F[i,j-1]\\
    &= F[i,j] + \lm(F[i,j]) + y_{i-1, j-1}(y_{i,j-1} + y_{i-1,j-1})\\
    &= y_{i,j} + y_{i-1,j} + y_{i-1,j-1}^2 + y_{i,j-1}y_{i-1,j-1}.
\end{align*}
    for every $i=2,\ldots,n$ and $j = 2,\ldots,\ell$, where equality holds because $\lm(F[i,j]) = y_{i-1,j-1}\lm(F[i,j-1])$.
\end{itemize}

\begin{remark}
    The algebraic system we are going to construct contains the field polynomials $x_i^2- x_i$ for each $i=1,\ldots,n$ and $y_{i,j}^2- y_{i,j}$ for every $i=1,\ldots,n$ and  $j=1,\ldots,\ell$. Therefore, in terms of generating elements of the ideal, any squared term in $G[i,j]$  can be reduced to a linear term.
\end{remark}

The set $G \subset \FFt[X,Y] $ contains $n\ell$ polynomials of degree at most two.
The following proposition proves that the set $G \subset \FFt[X,Y]$ computed as above and $F$ generate the same ideal of $\FFt[X,Y]$.
\begin{proposition}
    We have $\langle G \rangle = \langle F \rangle$.
\end{proposition}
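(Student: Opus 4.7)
The plan is to prove the two inclusions $\langle G \rangle \subseteq \langle F \rangle$ and $\langle F \rangle \subseteq \langle G \rangle$ separately.

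The inclusion $\langle G \rangle \subseteq \langle F \rangle$ is immediate from the construction of $G$: for $i=1$ or $j=1$ one has $G[i,j] = F[i,j]$ outright, while for $i,j \geq 2$ the definition $G[i,j] = F[i,j] + y_{i-1,j-1}\, F[i,j-1]$ exhibits $G[i,j]$ explicitly as an $\FFt[X,Y]$-linear combination of two elements of $F$.

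For the reverse inclusion, I would argue by induction on $j$ that every $F[i,j]$ lies in $\langle G \rangle$. The base cases are tautological: for $i=1$ we have $F[1,h] = G[1,h]$ for all $h$, and for $i \geq 2$ we have $F[i,1] = G[i,1]$. For the inductive step, with $i \geq 2$ and $j \geq 2$, rearranging the defining relation in characteristic $2$ gives
\begin{equation*}
F[i,j] = G[i,j] + y_{i-1,j-1}\, F[i,j-1],
\end{equation*}
and by the inductive hypothesis $F[i,j-1] \in \langle G \rangle$, hence $F[i,j] \in \langle G \rangle$ as well.

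I do not anticipate a genuine obstacle for the ideal equality itself: the construction performs a single elementary row operation on each generator $F[i,j]$ with $j \geq 2$, and such operations are invertible over $\FFt[X,Y]$, which is exactly what makes both inclusions formal. The more substantive verification---namely the leading-monomial identity $\lm(F[i,j]) = y_{i-1,j-1}\lm(F[i,j-1])$, which causes the top terms to cancel and collapses $G[i,j]$ to degree at most $2$---is logically separate from the statement of the proposition and follows from a direct inspection of the explicit shape of $F[i,j]$ in Equations~\eqref{eq:lineareqs}--\eqref{eq:othereqs}, with the $j=2$ case handled by noting that $\lm(F[i,1]) = x_i$.
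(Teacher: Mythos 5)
Your proof is correct and follows essentially the same route as the paper's: the inclusion $\langle G\rangle\subseteq\langle F\rangle$ is immediate from the construction, and the reverse inclusion is obtained by induction on $j$, rearranging the defining relation $G[i,j]=F[i,j]+y_{i-1,j-1}F[i,j-1]$ to express $F[i,j]$ in terms of $G[i,j]$ and the previously handled $F[i,j-1]$. Your closing observation that the leading-monomial cancellation is logically separate from the ideal equality is also accurate.
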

\begin{proof}
    The inclusion $\langle G \rangle \subseteq\langle F \rangle$ is trivial.    
    To prove the other inclusion, we show that we can write any element of the basis $F$ as an $\FFt[X,Y]$-linear combination of elements of the basis $G$.
    By construction, $G[1,j] = F[1,j]$ for every $j=1,\ldots,\ell$. For every $i = 2,\ldots,n$ we prove $F[i,j]\in \langle G \rangle$ by induction on $j$.\\
    For $j=1$ we have $F[i,1] = G[i,1]$.\\
    Assume that $F[i,j] = \sum_{h=1}^j p_{i,j,h} G[i,h]$ with $p_{i,j,h}\in \FFt[X,Y]$. Then by construction we have
    \begin{align*}
        F[i,j+1] &= G[i,j+1] - y_{i-1, j}F[i,j]\\
        &= G[i,j+1] - y_{i-1, j} \sum_{h=1}^j p_{i,j,h} G[i,h]
    \end{align*}
    proving the claim.
    \qed
\end{proof}
We thus redefine the Hamming weight computation encoding as follows:
\begin{itemize}
    \item \textit{Hamming weight computation encoding.} Compute the following union of subsets of $\FFt[X,Y]$:
    \begin{align*}
         &\left\{ x_1 + y_{1,1}, y_{1,2}, \ldots, y_{1,\ell} \right\} \cup \left\{x_i + y_{i, 1} + y_{i-1,1} \mid i=2,\ldots,n \right\}\\
 &\cup  \big\{ y_{i,j-1}y_{i-1,j-1} + y_{i,j} + y_{i-1,j-1} + y_{i-1,j} \\ & \ \ \ \mid i=2,\ldots,n,\ j=2,\ldots,\ell \big\},
    \end{align*}
\end{itemize}

\subsubsection{Further improvement.}
Set now $\ell_t = \lfloor \log_2 (t) \rfloor + 1$. A further improvement to the MPS modeling (described in Equation~\eqref{eq:SDhwce}) follows by observing that in the non-trivial case where $t < n$, we can impose that the last $\ell-\ell_t$ entries of $\vy_i$  must be $0$ for every $i=1,\ldots,n$. This means that we can add the linear equations $y_{i, j} = 0$ for every $i=1,\ldots,n$ and $j=\ell_t+1,\ldots,\ell$. 
By inspection, setting the aforementioned variables to $0$ will make part of the equations of the Hamming weight computation encoding vanish. We can equivalently simply consider the equations that remain, and get rid of the variables which have been set to $0$. 
Consider the following updated notation.

\begin{notation}\label{ImprovedMPSnotation}
    Let $n\ge 2$ and let $\CC \subseteq \FF_2^n$ be a $[n,k,d]$-linear code having a parity check matrix $\HH \in \FF_2^{(n-k) \times n}$.  Let $\vs \in \FF_2^{n-k}$ play the role of the syndrome and let $0\le t \le \lfloor (d-1)/2 \rfloor$ be the target error weight. We define $\ell_t = \lfloor \log_2(t) \rfloor + 1$. Let $X = \left(x_1,\ldots,x_n\right)$ and $Y=(Y_1,\dots,Y_n)$ with $Y_j=(y_{j,1}, \dots, y_{j,\ell_t})$ be two sets of variables and consider the polynomial ring $\FF_2[X,Y]$. 
\end{notation}

Under Notation~\ref{ImprovedMPSnotation}, the effect of our improvement on the set of polynomials produced by the Hamming weight computation encoding is the following.
\begin{itemize}
    \item \textit{Hamming weight computation encoding.} Compute the following union of subsets of $\FFt[X,Y]$:
    \begin{equation}\label{eq:SDhwce}
    \begin{split}
         &\left\{ x_1 + y_{1,1}, y_{1,2}, \ldots, y_{1,\ell_t} \right\} \cup \left\{x_i + y_{i, 1} + y_{i-1,1} \mid i=2,\ldots,n \right\}\\
 &\cup  \big\{ y_{i,j-1}y_{i-1,j-1} + y_{i,j} + y_{i-1,j-1} + y_{i-1,j}  \\ & \ \ \ \mid i=2,\ldots,n,\ j=2,\ldots,\ell_t \big\} \cup \left\{ y_{i,\ell_t}y_{i-1,\ell_t} + y_{i-1,\ell_t} \mid i=2,\ldots,n\right\}.
    \end{split}
    \end{equation}
\end{itemize}

The effect on the weight constraint encoding is simply the decrease in the degree from $\ell$ to $\ell_t$ of the produced polynomial. This is the only non-quadratic polynomial left in the modeling. We can turn this polynomial into a set of $\OO(t\ell_t)$ polynomials of degree up to $2$ in $\OO(t\ell_t)$ variables with the same linearization techniques described in~\cite[Fact 1 and Lemma 11]{2021/meneghetti}.

To summarize, our modeling is defined in the following way.
\begin{modeling}[Improved Modeling for the SDP over $\FF_2$] \label{modeling: improvedSD_F2}
   Given an instance $(\HH,\mathbf{s},t)$ of Problem~\ref{BSDP} over $\FF_2$, Modeling~\ref{modeling: improvedSD_F2} is the union of the sets of polynomials \eqref{eq:pce},\eqref{eq:MPSwce}, \eqref{eq:ffe} and \eqref{eq:SDhwce}.
\end{modeling}

The improved modeling is an algebraic system of $\OO(n(\ell_t+2) -k + t\ell_t)$ polynomials of degree at most $2$ in $\OO(n(\ell_t+1) + t\ell_t)$ variables. Note that most applications of the SDP to code-based cryptography, for instance in the McEliece scheme, choose $t \ll n$, hence the asymptotic bounds on the number of polynomials and variables in the improved modeling are both $\OO(n\ell_t)$.
As shown in Table \ref{table: improvement}, our modeling improves over MPS by a factor of $\log_2(n) \log_t(n)$.
\begin{table}[H]
    \centering
\begin{tabular}{|c|c|c|}
	\hline
	& \# Polynomials & \# Variables\\
	\hline 
	\cite{2021/meneghetti}		& $\mathcal{O}( n \log_2(n)^2)$ & $\mathcal{O}( n \log_2(n)^2)$ \\
 \hline
	Modeling~\ref{modeling: improvedSD_F2} 	& $\OO(n\log_2(t))$ & $\OO(n\log_2(t))$\\
			\hline
	\end{tabular}
  \vspace{2mm}
    \caption{Comparison with the asymptotic size of the polynomial system in \cite[Theorem 13]{2021/meneghetti}, where $n$ is the length of the code and $t$ the bound on the weight of the target vector, that is  $\wt(\ve)\leq t$.}
    \label{table: improvement}
\end{table}

\subsection{Improved Modeling for the Case of ESDP}\label{subsec:f2ESD}
It is possible to obtain an algebraic modeling for the ESDP by tweaking the modeling described in the previous section. 
In fact, it is enough to redefine the weight constraint encoding to enforce that $\vy_n$ represents the binary expansion of an integer $t^\prime$ such that $t^\prime=t$ exactly. To this end, let $\vv \in \FFt^{\ell_t}$ represent the binary expansion of an integer $t$.  Under the same notation as in Notation~\ref{ImprovedMPSnotation}, the following version of the weight constraint encoding describes the ESDP modeling with $\wt(\ve) = t$.
\begin{itemize}
    \item \textit{Weight constraint encoding.} Compute the following set of linear polynomials: 
    \begin{equation}\label{eq:ESDwce}
        \left\{ y_{n, j} + v_j \mid j=1,\ldots,\ell_t \right\}.
    \end{equation}
\end{itemize}

Using these polynomials leads to Modeling
\begin{modeling}[Improved Modeling for the ESDP over $\FF_2$] \label{modeling: improvedESD_F2}
   Given an instance $(\HH,\mathbf{s},t)$ of Problem~\ref{EWSDP} over $\FF_2$, Modeling~\ref{modeling: improvedESD_F2} is the union of the sets of polynomials \eqref{eq:pce}, \eqref{eq:ffe}, \eqref{eq:SDhwce} and \eqref{eq:ESDwce}.
\end{modeling}

Observe that, replacing the original Hamming weight computation encoding with that in~\eqref{eq:SDhwce} and the weight constraint encoding with that in~\eqref{eq:ESDwce}, we obtain an algebraic system of polynomials of degree at most $2$ for ESDP. Hence, linearization is not needed, moreover, we can give the exact number of equations and variables of this system. We report these values in Table~\ref{table:esd-model-sizes}.

\begin{table}[H]
    \centering
\begin{tabular}{|c|c|c|}
	\hline
	& \# Polynomials & \# Variables\\
 \hline
	
   Modeling~\ref{modeling: improvedESD_F2}  & $2n\ell_t + 3n + \ell_t - k - 1$ & $n(\ell_t + 1)$\\
   \hline
	\end{tabular}
  \vspace{2mm}

    \caption{Number of equations and variables of the algebraic modeling of ESDP with $\wt(\ve)=t$. The value of $\ell_t$ is $\lfloor \log_2(t) \rfloor + 1$.}
    \label{table:esd-model-sizes}
\end{table}

\section{Complexity Analysis of Modeling~\ref{modeling: improvedESD_F2}}\label{sec:complexity-analysis}  

\label{sec:dreg-EWM}
In this section, we investigate the complexity of solving the algebraic system for the ESDP given in Modeling~\ref{modeling: improvedESD_F2} using standard Gröbner basis methods. An upper bound on the complexity is given by the formula \eqref{eq:GBcomplexity} which depends on both the number of variables and the solving degree.  Typically, the solving degree of the system is estimated by assessing its degree of regularity.

However, in our analysis, we experimentally show that the degree of regularity often significantly exceeds the solving degree for systems given in Section~\ref{subsec:f2ESD} (see the results in Table~\ref{Tab:q2-SolveDeg}).
This distinction is crucial in cryptography, where these concepts are frequently used interchangeably. Our findings underscore the importance of thoroughly verifying such claims to ensure accurate security assessments and parameter selection.

\begin{remark}
We point out that the study in \cite{2023/briaud} investigates a particular case of the problem that this paper deals with, that is the \emph{regular} syndrome decoding problem. The regular syndrome decoding problem considers error vectors having a regular distribution of non-zero entries. The algebraic modeling proposed in~\cite{2023/briaud} is conjectured to exhibit semi-regular behavior when the linear parity-check constraints and the fixed, structured quadratic polynomials are considered separately. This suggests that, to some extent, their model behaves like a random polynomial system. Despite the fact that the problem tackled in~\cite{2023/briaud} is a particular case of the problem we consider, our modeling has not been devised as a generalization of their modeling. Furthermore, we show that for the more general case, our modeling yields different results.
\end{remark}

	For the rest of this section, we retain the notation defined in Notation~\ref{ImprovedMPSnotation}. 
    We consider the polynomial ring  $\FFt[X,Y]$ with the $\mathsf{degrevlex}$ term order with the $X$ variables greater than the $Y$ variables. Let $S \subset \FFt[X,Y]$ be the set of polynomials of Modeling~\ref{modeling: improvedESD_F2} as described in Section~\ref{subsec:f2ESD}. 
			Let $L$ and $Q$ denote the sets of linear and quadratic polynomials, respectively. Clearly $S = L \cup Q$. Write also $L = L_\vH \cup P$, where $L_\vH$ denotes the set of linear polynomials in~\eqref{eq:pce} introduced with the parity check matrix $\vH$, and $P$ denotes the remaining linear polynomials in $S$. In other words, $P$ is the following set
			\[\begin{split}
				P = &\left\{ x_1 + y_{1,1}, y_{1,2}, \ldots, y_{1,\ell_t} \right\} \cup \left\{x_i + y_{i, 1} + y_{i-1,1} \mid i=2,\ldots,n \right\} \\ \cup &\left\{ y_{n, j} + v_j \mid j=1,\ldots,\ell_t \right\}.
                \end{split}
			\]
			We want to estimate the degree of regularity of $S$. 
			Since we do not know $L_\vH$ a priori, we consider the set $S\setminus L_\vH = Q \cup P$ and compute its degree of regularity. Indeed, we found that analyzing the degree of regularity or solving degree of the system with the linear equations \eqref{eq:pce} of $L_\vH$ included was too challenging and unpredictable, as it heavily depends on the specific instance of the parity check matrix $\vH$.
For this reason, we chose to establish mathematical results for the system without $L_{\vH}$, with the aim of providing a clearer foundation. Notice that the degree of regularity of $S\setminus L_\vH = Q \cup P$ gives an upper bound to the degree of regularity of the whole system $S$ (see Remark~\ref{rem:range fordregS}).

			We break down the problem by first computing the degree of regularity of $Q$ and then that of $Q \cup P$.
			We take advantage of the fact that the Hilbert series of $Q$ and of $Q \cup P$ are polynomials and compute their degree, i.e. for instance, $\dreg{Q}=\deg \mathrm{HS}_{\FFt[X,Y]/\langle Q^\htop\rangle}(z)+1$ as per Remark~\ref{rem:polyHS}, similarly for $Q\cup P$. To this end, we are going to compute the maximum degree of a monomial in $\FFt[X,Y]/\langle Q^\htop\rangle$, similarly we do for $Q \cup P$.

\subsubsection{The quadratic polynomials.}\label{subsec:quad-polys}
We begin by studying the degree of regularity of the quadratic part $Q$ of the system $S$ of  Modeling~\ref{modeling: improvedESD_F2}. The highest degree part of $Q$ has a very nice structure, as explained in the following remark.

\begin{remark}\label{rem:qtopdef}
    The set $Q^\htop$ is the union of the following three sets
    $$\left\{x_i^2 \mid i=1,\ldots,n\right\}, \left\{y_{i,j}^2 \mid i=1,\ldots,n,\ j=1,\ldots,\ell_t\right\}$$ 
    and $$\left\{ y_{i-1,j}y_{i,j}  \mid i=2,\ldots,n,\ j=1,\ldots,\ell_t \right\}.$$ The ideal $\langle Q^\htop \rangle \subseteq \FFt[X,Y]$ is thus a monomial ideal.
\end{remark}

The following lemma gives the structure of the quotient ring $\FFt[X,Y]/\langle Q^\htop \rangle$.
\begin{lemma}\label{lem:groebnerQh}
    The set $Q^\htop$ is a Gr\"obner basis of the ideal $\langle Q^\htop\rangle$.
\end{lemma}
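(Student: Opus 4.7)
The plan is to apply Buchberger's criterion and exploit the fact, already observed in Remark~\ref{rem:qtopdef}, that every element of $Q^\htop$ is a (monic) monomial, hence its own leading monomial.

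Concretely, I would fix any two generators $m_1, m_2 \in Q^\htop$ and form their $S$-polynomial
\[
S(m_1, m_2) \;=\; \frac{\lcm(m_1, m_2)}{m_1}\, m_1 \;-\; \frac{\lcm(m_1, m_2)}{m_2}\, m_2,
\]
which is identically $0$ in $\FFt[X,Y]$ because both summands equal $\lcm(m_1, m_2)$ (and $\mathrm{char}=2$ is irrelevant here). Hence every $S$-polynomial reduces to $0$ modulo $Q^\htop$, and Buchberger's criterion immediately yields that $Q^\htop$ is a Gr\"obner basis of $\langle Q^\htop\rangle$. This argument is term-order independent, so it works in particular for the chosen $\mathsf{degrevlex}$ order with $X$ variables greater than the $Y$ variables.

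There is essentially no obstacle: the statement is a standard instance of the general fact that any set of monomials forms a Gr\"obner basis of the monomial ideal it generates. The only thing worth double-checking is that all three families listed in Remark~\ref{rem:qtopdef} are indeed pure monomials (no cancellations have hidden a binomial structure), which is clear from the description of the top parts of the quadratic polynomials in~\eqref{eq:SDhwce} and~\eqref{eq:ffe}. Thus the proof can be kept to a couple of lines.
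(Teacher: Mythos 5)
Your proof is correct and is essentially identical to the paper's: both observe that the generators are monomials and that every $S$-polynomial $S(m_1,m_2)=\frac{\lcm(m_1,m_2)}{m_1}m_1-\frac{\lcm(m_1,m_2)}{m_2}m_2$ vanishes identically, so Buchberger's criterion applies. Your version is slightly more explicit in naming the criterion and the general fact about monomial ideals, but the argument is the same.
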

\begin{proof}
    As observed in Remark~\ref{rem:qtopdef}, $Q^\htop$ is a monomial ideal. Given any two elements of $m_1,m_2 \in Q^\htop$ it is clear that for $a = \lcm (m_1,m_2)/m_1 \in \FFt[X,Y]$ and $b = \lcm (m_1,m_2)/m_2 \in \FFt[X,Y]$ we have that $am_1 - bm_2 = 0$. \qed
\end{proof}
\ifodd0
    We can exploit the knowledge of the Gr\"obner basis of $\langle Q^\htop \rangle$ given in Lemma \ref{lem:groebnerQh} to compute the coefficients of the Hilbert series $\mathcal{H}_R$. The $(k+1)$-th coefficient of $\mathcal{H}_R$ is given by $\dim_{\FFq}(\FFt[X,Y]_k/I_k)$, in other words, the number of monomials of degree $k$ in $R$. This coincides with the number of monomials of $\FFt[X,Y]$ of degree $k$ that are not a multiple of any monomial in $\GG$.

    We can model this problem in terms of subsets of $[n(l+1)]$, or equivalently, elements of $2^{[n(l+1)]}$. Let $B_1,\ldots
    B_{n\ell -n-\ell +1}$ be the sets of two elements indexing the variables of each mixed monomial in $\GG$ (monomials in the third set). Counting monomials of degree $k$ in $R$ boils down to counting the number of subsets of $[n(l+1)]$ of cardinality $k$ not containing any $B_i$.
\fi
       \begin{example}\label{ex:n4}
        Let $n=4$ be the length of a code, then $\ell_t = 2$. A Gr\"obner basis of $\langle Q^\htop \rangle$ is the union of
       \begin{equation*}
            \left\{ y_{1,1}y_{2,1},
            y_{1,2}y_{2,2},
            y_{2,1}y_{3,1},
            y_{2,2}y_{3,2},
            y_{3,1}y_{4,1},
            y_{3,2}y_{4,2}\right\}
       \end{equation*}
       and
       \begin{equation*}
           \left\{
            x_{1}^2,
            x_{2}^2,
            x_{3}^2,
            x_{4}^2,
            y_{1,1}^2,
            y_{1,2}^2,
            y_{2,1}^2,
            y_{2,2}^2,
            y_{3,1}^2,
            y_{3,2}^2,
            y_{4,1}^2,
            y_{4,2}^2
            \right\}.
       \end{equation*}
       \ifodd0
       Following our argument we obtain the $(n-1)\cdot(l-1) = n\ell -n-\ell+1 = 6$ sets $B_i$, indexing mixed monomials, are
       \begin{align*}
           B_1 = \{1,4\},&B_2 = \{4,7\},B_3 = \{7,11\},\\
           B_4 = \{2,5\},&B_5 = \{5,8\},B_6 = \{8,11\}.
       \end{align*}
       \fi
   \end{example}

\noindent The following simple lemma is crucial for computing the degree of regularity of $Q$. For the sake of simplicity, we state it in terms of sets, and it ultimately provides a method to construct maximal monomials in the quotient ring $\FFt[X,Y]/\langle Q^\htop \rangle$.

\begin{lemma}\label{lem:maximalset}
Let $ \mathcal{N} = \{1, 2, 3, \dots, n\} $ and $ \mathcal{P} = \{\{1,2\}, \{2,3\}, \dots, \{n-1, n\}\} $, where $ \mathcal{P} $ consists of consecutive pairs of elements from $ \mathcal{N} $. Then:
\begin{itemize}
    \item If $ n $ is even, there are exactly two sets of maximal cardinality  $ \mathcal{S}_1, \mathcal{S}_2 \subseteq \mathcal{N} $ such that no set in $ \mathcal{P} $ is a subset of $ \mathcal{S} $.
    \item If $ n $ is odd, there is exactly one set of maximal cardinality $ \mathcal{S} \subseteq \mathcal{N} $ such that no set in $ \mathcal{P} $ is a subset of $ \mathcal{S} $.
\end{itemize}
\end{lemma}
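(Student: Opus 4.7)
My plan is to translate the condition on $\mathcal{S}$ into independence in the path graph on vertex set $\mathcal{N}$ with edge set $\mathcal{P}$, and then to induct on $n$. The first step is a pigeonhole: group $\mathcal{N}$ into the $\lfloor n/2 \rfloor$ consecutive pairs (with a leftover singleton when $n$ is odd), so that each such pair lies in $\mathcal{P}$ and hence contributes at most one element to any admissible $\mathcal{S}$. This gives $|\mathcal{S}| \le \lceil n/2 \rceil$, a bound attained by $\{1, 3, 5, \ldots\}$ and, when $n$ is even, also by $\{2, 4, \ldots, n\}$, so the maximal cardinality is exactly $\lceil n/2 \rceil$.

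For the enumeration I would use strong induction on $n$, splitting on whether the endpoint $n$ lies in $\mathcal{S}$. If $n \in \mathcal{S}$ then $n-1 \notin \mathcal{S}$ and $\mathcal{S} \setminus \{n\}$ is a maximum admissible subset of $\{1, \ldots, n-2\}$, so the induction hypothesis applies to a smaller instance of the same parity as $n$. If $n \notin \mathcal{S}$ then $\mathcal{S}$ is a maximum admissible subset of $\{1, \ldots, n-1\}$, and I invoke the hypothesis for the opposite parity.

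For $n$ odd the second branch is ruled out on cardinality grounds, since $\lceil (n-1)/2 \rceil = (n-1)/2 < (n+1)/2$; so $n$ belongs to every maximum $\mathcal{S}$, and the induction hypothesis on the odd value $n-2$ pins $\mathcal{S} = \{1, 3, \ldots, n\}$, proving uniqueness. For $n$ even both branches are a priori feasible: the branch $n \notin \mathcal{S}$ combined with the odd-case uniqueness yields $\mathcal{S}_1 = \{1, 3, \ldots, n-1\}$, and a symmetric analysis anchored at element $1$ instead of $n$ yields $\mathcal{S}_2 = \{2, 4, \ldots, n\}$.

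I expect the main obstacle to be the even case: unlike for odd $n$, the endpoint memberships of a maximum $\mathcal{S}$ are not forced, and already at $n=4$ one can check that $\{1,4\}$ is another maximum independent set. So the naive induction based on whether $n$ lies in $\mathcal{S}$ does not on its own collapse the count to two, and a strengthened invariant (for example, tracking that the extreme elements of $\mathcal{S}$ differ by exactly $n-2$, or separately recording the parity of $\min \mathcal{S}$) needs to be folded into the hypothesis. I would introduce this refinement from the start, since what is actually used downstream to build maximal monomials in $\FFt[X,Y]/\langle Q^\htop \rangle$ is the cardinality $\lceil n/2 \rceil$ together with the two canonical sets $\mathcal{S}_1, \mathcal{S}_2$.
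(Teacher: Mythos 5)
Your pigeonhole argument for the upper bound (grouping $\mathcal{N}$ into $\lfloor n/2\rfloor$ consecutive pairs, each contributing at most one element, hence $|\mathcal{S}|\le\lceil n/2\rceil$) and your inductive treatment of the odd case are both correct, and are in fact more rigorous than the paper's own proof, which only argues informally that one must ``select every other element.'' But the obstacle you flag in the even case is not a defect of your induction that a strengthened invariant could repair: it is a counterexample to the lemma as stated. For $n=4$ the set $\{1,4\}$ has the maximal cardinality $2$ and contains no pair from $\mathcal{P}$, so there are three admissible sets of maximal cardinality, not two. In general, for even $n=2k$ an admissible set of size $k$ is determined by where the single gap of length two is placed among the $k$ gaps, giving $\binom{k+1}{k}=n/2+1$ such sets. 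The paper's proof of the even case simply asserts ``there are no other ways to select $k$ elements without picking consecutive elements,'' which is precisely the false step; so the discrepancy you found is an error in the lemma itself, not a gap you are obliged to close. Your proposed refinements (forcing the extremes of $\mathcal{S}$ to differ by $n-2$, or tracking the parity of $\min\mathcal{S}$) would indeed isolate $\mathcal{S}_1$ and $\mathcal{S}_2$, but at the cost of proving a genuinely different, restricted statement rather than the one claimed.

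The good news, which you also anticipate, is that the miscount is harmless downstream. Corollary~\ref{cor:maximalmonomial} and Theorems~\ref{Thm:Dreg-of-Qtop} and~\ref{thm:dregQtopPtop} only use the value $\lceil n/2\rceil$ of the maximal cardinality together with the existence of at least one witness monomial not lying in the monomial ideal; the number of maximal-degree standard monomials never enters the computation of $\deg\mathrm{HS}$, hence of $\dreg{Q}$ or $\dreg{Q\cup P}$. So the correct fix is to weaken the lemma (and the ``two monomials'' phrasing of Corollary~\ref{cor:maximalmonomial}) to an existence statement, or to replace ``exactly two'' by ``exactly $n/2+1$'' in the even case; your upper-bound argument plus the explicit witnesses $\{1,3,\dots\}$ and $\{2,4,\dots,n\}$ already proves everything that is actually needed.
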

\begin{proof}
We aim to find the number of sets of maximal cardinality  $ \mathcal{S} \subseteq \mathcal{N} $ such that no pair from $ \mathcal{P} $ (i.e., no two consecutive elements) appears in $ \mathcal{S} $.
In order to avoid pairs of consecutive elements, we can only select non-consecutive elements from $ \mathcal{N} $. To maximize the size of $ \mathcal{S} $, we select every other element from $ \mathcal{N} $. The size of such a set of maximal cardinality $ \mathcal{S} $ is: $\left\lceil \frac{n}{2} \right\rceil$. Thus:
\begin{itemize}
    \item If $ n $ is even, a set of maximal cardinality contains $ \frac{n}{2} $ elements.
    \item If $ n $ is odd, a set of maximal cardinality contains $ \frac{n+1}{2} $ elements.
\end{itemize}
\textbf{Case 1: $ n $ is even.} 
Let $ n = 2k $. The largest possible set $ \mathcal{S} $ will contain $ k = \frac{n}{2} $ elements. 
There are exactly two ways to construct such a set:
\begin{enumerate}
    \item Start with 1 and select every other element: $\mathcal{S}_1 = \{1, 3, 5, \dots, n-1\}.$
    This set contains all the odd-numbered elements of $ \mathcal{N} $, and its size is $ k $.
    
    \item Start with 2 and select every other element: $\mathcal{S}_2 = \{2, 4, 6, \dots, n\}.$
    This set contains all the even-numbered elements of $ \mathcal{N} $, and its size is also $ k $.
\end{enumerate}
Since there are no other ways to select $ k $ elements without picking consecutive elements, these are the only two sets of maximal cardinality  for $ n $ even.\\
\textbf{Case 2: $ n $ is odd.}
Let $ n = 2k + 1 $. The largest possible set $ \mathcal{S} $ contains $ k + 1 = \frac{n+1}{2} $ elements.
In this case, there is only one way to construct a set of size $ k + 1 $ that avoids consecutive elements, i.e. start with 1 and select every other element: $\mathcal{S}_1 = \{1, 3, 5, \dots, n\}.$
This set contains $ k + 1 $ elements and avoids consecutive pairs.
If we were to start with 2 and select every other element, we would only get $ k $ elements: $\mathcal{S}_2 = \{2, 4, 6, \dots, n-1\}.$
This is not maximal, as it contains fewer than $ k + 1 $ elements.
Thus, for $ n $ odd, there is exactly one maximal set.
\qed
\end{proof}
Lemma~\ref{lem:maximalset} can be used to prove the following corollary, which we will use to construct a maximal degree monomial in $\FFt[X,Y]/\langle Q^\htop \rangle$.
The idea behind the construction lies in the observation that a Gr\"obner basis of $Q^\htop$ can be written as the union of disjoint subsets $Q^\htop_{j,n}$ for $j=1,\ldots,\ell_t$, see Theorem~\ref{Thm:Dreg-of-Qtop}, which we describe in the next corollary. Also, the next corollary computes a maximal degree monomial with respect to $Q^\htop_{j,n}$ for every $j=1,\ldots,\ell_t$. Given these monomials, computing a maximal degree monomial in $\FFt[X,Y]/\langle Q^\htop \cup P^\htop\rangle$, or equivalently, the degree of its Hilbert series, becomes feasible with a slight modification of the subsets due to the presence of linear polynomials in $P^\htop$.

\begin{corollary}\label{cor:maximalmonomial}
    Let $n\in \mathbb{N}$ with $n\ge 2$, and define $$Q^\htop_{j,n} := \left\{ y_{1,j}y_{2,j}, y_{2,j}y_{3,j}, \ldots, y_{n-1,j}y_{n,j}\right\} \cup \left\{y_{i,j}^2 \mid i=1,\ldots,n\right\} \subset \FFt[y_{1,j},\ldots,y_{n,j}],$$
    for some $j\in \mathbb{N}$.
    If $n$ is even then there exists two monomials of maximal degree $\left\lceil\frac{n}{2} \right\rceil$ in $\FFt[y_{1,j},\ldots,y_{n,j}]/\langle Q^\htop_{j,n} \rangle$, namely \[
        m_1 = \prod_{\substack{i=1,\ldots,n-1,\\ i\ \text{odd}}}y_{i,j} \quad \textnormal{and}\quad m_2 =\prod_{\substack{i=2,\ldots,n,\\ i\ \text{even}}}y_{i,j}.
    \]
    If $n$ is odd, then there exists a unique monomial of maximal degree $\left\lceil\frac{n}{2} \right\rceil$ in $\FFt[y_{1,j},\ldots,y_{n,j}]/\langle Q^\htop_{j,n} \rangle$, namely
    \[
        m = \prod_{\substack{i=1,\ldots,n,\\ i\ \text{odd}}}y_{i,j}.
    \]
\end{corollary}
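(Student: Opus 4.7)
The plan is to show that the corollary is a direct translation of Lemma~\ref{lem:maximalset} into the language of monomials, via the standard bijection between squarefree monomials in $\FFt[y_{1,j},\ldots,y_{n,j}]$ and subsets of $\mathcal{N}=\{1,2,\ldots,n\}$.

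First I would observe that $\langle Q^\htop_{j,n}\rangle$ is a monomial ideal, so a monomial $m = \prod_{i=1}^n y_{i,j}^{a_i}$ represents a nonzero class in $\FFt[y_{1,j},\ldots,y_{n,j}]/\langle Q^\htop_{j,n}\rangle$ if and only if it is not divisible by any element of $Q^\htop_{j,n}$. The presence of the squares $y_{i,j}^2$ forces $a_i \in \{0,1\}$, so every nonzero class is represented by a squarefree monomial. Associating to such a monomial its support $\mathcal{S}_m := \{i \in \mathcal{N} \mid a_i = 1\}$ gives a bijection between nonzero squarefree monomial classes and subsets of $\mathcal{N}$, under which $\deg m = |\mathcal{S}_m|$. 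The generators $y_{i,j}y_{i+1,j}$ rule out precisely those $\mathcal{S}_m$ which contain some consecutive pair $\{i,i+1\} \in \mathcal{P}$, where $\mathcal{P}$ is the set of consecutive pairs from Lemma~\ref{lem:maximalset}.

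Next I would apply Lemma~\ref{lem:maximalset} verbatim: maximizing $\deg m$ over nonzero classes amounts to maximizing $|\mathcal{S}_m|$ over subsets of $\mathcal{N}$ containing no element of $\mathcal{P}$, and the lemma tells us exactly that the maximal cardinality is $\lceil n/2 \rceil$, achieved by one subset when $n$ is odd and by two subsets when $n$ is even. Reading off the corresponding monomials from the explicit sets $\mathcal{S}_1 = \{1,3,5,\ldots\}$ and (when $n$ is even) $\mathcal{S}_2 = \{2,4,6,\ldots\}$ of Lemma~\ref{lem:maximalset} yields exactly $m_1 = \prod_{i\ \text{odd}} y_{i,j}$ and $m_2 = \prod_{i\ \text{even}} y_{i,j}$ in the even case, and the unique $m = \prod_{i\ \text{odd}} y_{i,j}$ in the odd case, matching the statement.

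There is no genuine obstacle: the only thing to be careful about is stating the bijection cleanly, noting that the squares in $Q^\htop_{j,n}$ kill every non-squarefree monomial so that counting degrees really does reduce to counting the cardinalities of supports. Once this translation is in place, the corollary follows from Lemma~\ref{lem:maximalset} with no further computation.
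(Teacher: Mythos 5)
Your proof is correct and follows exactly the route the paper intends: the paper gives no separate proof of this corollary, merely remarking that it follows from Lemma~\ref{lem:maximalset}, and your translation via the bijection between squarefree monomials avoiding the mixed generators and subsets of $\{1,\ldots,n\}$ avoiding consecutive pairs is precisely the intended argument.
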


\noindent We are ready to prove the following theorem, which provides the degree of regularity of $Q$. 

\begin{theorem}\label{Thm:Dreg-of-Qtop}
      $$\dreg{Q}= \begin{cases}
        n + \ell_t n/2 + 1 \quad &\text{ if } n \equiv 0 \bmod 2\\
        n + \ell_t(n+1)/2 + 1 \quad &\text{ if } n \equiv 1 \bmod 2
    \end{cases}.$$
    Equivalently,
    $$\dreg{Q} = n + \ell_t\lceil n/2 \rceil  + 1.$$
\end{theorem}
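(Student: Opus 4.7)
The plan is to compute $\dreg{Q}$ via Remark~\ref{rem:polyHS}, which identifies it with $\deg \mathrm{HS}_{\FFt[X,Y]/\langle Q^\htop\rangle}(z)+1$. Since $\langle Q^\htop\rangle$ is a monomial ideal, the Hilbert series is a polynomial (the ring modulo $\langle Q^\htop\rangle$ is a finite-dimensional $\FFt$-vector space spanned by monomials), so computing its degree is equivalent to finding the largest degree of a standard monomial, i.e.\ a monomial not divisible by any generator of $\langle Q^\htop\rangle$. By Lemma~\ref{lem:groebnerQh}, the generators listed in Remark~\ref{rem:qtopdef} form a Gr\"obner basis, so I can reason directly in terms of divisibility by these generators.

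The key observation is that the generators of $\langle Q^\htop\rangle$ partition by disjoint variable sets: the squares $x_i^2$ involve only the $X$-variables, and for each column index $j\in\{1,\dots,\ell_t\}$ the generators in $Q^\htop_{j,n}$ (as defined in Corollary~\ref{cor:maximalmonomial}) involve only the variables $y_{1,j},\dots,y_{n,j}$. Since the variable sets are disjoint, any standard monomial factors uniquely as a product of a standard monomial in each block, and its degree is the sum of the degrees of these factors. Therefore the maximum degree of a standard monomial equals the sum of the maximum degrees within each block.

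For the $X$-block, the only obstructions are the $x_i^2$, so each variable appears with exponent at most one; the unique standard monomial of maximum degree is $x_1 x_2 \cdots x_n$, of degree $n$. For each $Y$-column, Corollary~\ref{cor:maximalmonomial} gives that the maximum degree of a standard monomial in $\FFt[y_{1,j},\dots,y_{n,j}]/\langle Q^\htop_{j,n}\rangle$ is $\lceil n/2\rceil$. Summing over the $\ell_t$ columns yields a contribution of $\ell_t \lceil n/2\rceil$. Combining everything, the maximum standard monomial has degree $n+\ell_t\lceil n/2\rceil$, hence
\[
\deg \mathrm{HS}_{\FFt[X,Y]/\langle Q^\htop\rangle}(z)=n+\ell_t\lceil n/2\rceil,
\]
and adding $1$ gives the stated formula, which matches the case distinction based on the parity of $n$.

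The only delicate step is justifying the tensor/product decomposition of standard monomials across the variable blocks; but this is immediate once one uses that $\langle Q^\htop\rangle$ is a monomial ideal whose minimal generating set involves disjoint sets of variables in each block. The rest is a direct application of the already-proven combinatorial lemma and its corollary.
\qed
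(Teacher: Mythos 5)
Your proposal is correct and follows essentially the same route as the paper: decompose $Q^\htop$ into the squares $\{x_i^2\}$ and the per-column sets $Q^\htop_{j,n}$, apply Corollary~\ref{cor:maximalmonomial} to each column, and sum the maximal degrees to get $n+\ell_t\lceil n/2\rceil$. Your explicit appeal to the disjoint-variable-block factorization of standard monomials is in fact a slightly more complete justification of the upper bound (that \emph{no} monomial of degree $d+1$ survives) than the paper's argument, which only checks that non-constant multiples of the particular constructed monomial $m$ lie in $\langle Q^\htop\rangle$.
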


\begin{proof}
    Let $Q^\htop_{j,n} \subset \FFt[y_{1,j},\ldots,y_{n,j}]$ as in Corollary~\ref{cor:maximalmonomial}, for every $j=1,\ldots,\ell_t$. Observe that 
    \begin{equation}\label{eq:qtopasunion}
    Q^\htop = \bigcup_{j=1}^{\ell_t} Q^\htop_{j,n} \cup \left\{x_i^2 \mid i=1,\ldots,n\right\}.
    \end{equation}
    Corollary~\ref{cor:maximalmonomial} computes a monomial $m_j \in \FFt[y_{1,j},\ldots,y_{n,j}]$ of maximal degree $\lceil n/2 \rceil$ such that $m_j \not \in \langle Q^\htop_h\rangle$ for every $j=1,\ldots,\ell_t$ and every $h=1,\ldots,\ell_t$. This implies that $m_j \not \in \langle Q^\htop \rangle$ for every $j$.
    It is now clear that the monomial
    \[
        m:= \prod_{i=1}^n x_i \prod_{j=1}^{\ell_t}m_j \in \FFt[X,Y]
    \]
    is such that $m \not \in \langle Q^\htop \rangle$. Note that the the set $\left\{x_i^2 \mid i=1,\ldots,n\right\}$ in \eqref{eq:qtopasunion} enforces that $m$ must be squarefree in the variables $x_1,\ldots,x_n$. By the maximality of each $m_j$ and that of $\prod_{i=1}^n x_i$, any multiple of $m$ by a non-constant term would trivially be in $\langle Q^\htop \rangle$.
    Since $$d:=\deg m = n + \ell_t\lceil n/2 \rceil,$$
    we have that the $(d+1)$-th coefficient of the Hilbert series of  $\FFt[X,Y]/\langle Q^\htop \rangle$ is $0$. The result on the degree of regularity $\dreg{Q}$ follows.
    \qed
\end{proof}

\ifodd0

\begin{theorem}\label{Thm:Dreg-of-Qtop}
      $$\dreg{Q}= \begin{cases}
        2n + (\ell-1)n/2 + 1 \quad &\text{ if } n \equiv 0 \bmod 2\\
        2n + (\ell-1)(n+1)/2 + 1 \quad &\text{ if } n \equiv 1 \bmod 2
    \end{cases}.$$
    Equivalently,
    $$\dreg{Q} = 2n + (\ell-1)\lceil n/2 \rceil  + 1.$$
\end{theorem}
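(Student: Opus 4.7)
The plan is to exploit the fact that $\langle Q^\htop\rangle$ is a monomial ideal (by Remark~\ref{rem:qtopdef} and Lemma~\ref{lem:groebnerQh}), so that the Hilbert series of $\FFt[X,Y]/\langle Q^\htop\rangle$ is a polynomial and, by Remark~\ref{rem:polyHS}, $\dreg{Q}$ equals the degree of this polynomial plus $1$. Since the ideal is monomial, that degree is precisely the largest degree of a \emph{standard monomial}, i.e.\ a monomial of $\FFt[X,Y]$ not lying in $\langle Q^\htop\rangle$. The hypothesis $\langle Q^\htop\rangle_d = R_d$ for $d\gg 0$ required by Remark~\ref{rem:polyHS} is immediate here, because the ideal contains all pure squares $x_i^2$ and $y_{i,j}^2$, forcing standard monomials to be squarefree and hence of bounded degree.

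The key structural observation is that the generators of $\langle Q^\htop\rangle$ decouple across disjoint variable blocks: the squares $\{x_i^2\}_{i=1}^n$ involve only the $X$-variables, while for each fixed $j=1,\ldots,\ell_t$ the squares $y_{i,j}^2$ and the adjacency products $y_{i-1,j}y_{i,j}$ involve only the variables $y_{1,j},\ldots,y_{n,j}$, matching the set $Q^\htop_{j,n}$ of Corollary~\ref{cor:maximalmonomial}. Because these variable blocks are pairwise disjoint, a monomial lies outside $\langle Q^\htop\rangle$ if and only if its projection onto each block lies outside the corresponding sub-ideal. Hence a maximum-degree standard monomial is simply the product of maximum-degree standard monomials in each block.

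The block-by-block maxima are easy to identify. In the $X$-block the only constraint is squarefreeness, so $\prod_{i=1}^n x_i$ is the unique maximum-degree standard monomial, of degree $n$. In the $j$-th $y$-block, Corollary~\ref{cor:maximalmonomial} provides a maximum-degree standard monomial $m_j$ of degree $\lceil n/2\rceil$. Setting
\[
m \;=\; \prod_{i=1}^n x_i \cdot \prod_{j=1}^{\ell_t} m_j,
\]
I obtain $m\notin \langle Q^\htop\rangle$, while by maximality in each block any non-constant multiple of $m$ lies in the ideal. Therefore $\deg m = n+\ell_t\lceil n/2\rceil$ is the degree of the Hilbert series, and $\dreg{Q}=n+\ell_t\lceil n/2\rceil+1$. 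Splitting on the parity of $n$ recovers the two cases of the theorem.

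The main obstacle is rigorously justifying the block-decoupling step: one must argue that a monomial $m=m_X \cdot \prod_j m_{Y_j}$ belongs to $\langle Q^\htop\rangle$ if and only if one of its block factors already belongs to its own sub-ideal; equivalently, that the quotient $\FFt[X,Y]/\langle Q^\htop\rangle$ is isomorphic to the tensor product over $\FFt$ of the corresponding block quotients. This is a standard consequence of the pairwise disjointness of the variable sets and of the fact that each generator of $\langle Q^\htop\rangle$ lives in a single block, and once it is in place the entire argument reduces to the combinatorial count already handled by Lemma~\ref{lem:maximalset} and Corollary~\ref{cor:maximalmonomial}.
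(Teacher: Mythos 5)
Your argument is correct and is essentially the paper's own proof: the paper likewise decomposes $Q^\htop$ as the disjoint union of the $x$-squares and the blocks $Q^\htop_{j,n}$ for $j=1,\ldots,\ell_t$, takes the product of $\prod_{i=1}^n x_i$ with the maximal monomials $m_j$ supplied by Corollary~\ref{cor:maximalmonomial}, and reads off $\dreg{Q}$ from the degree of the Hilbert series via Remark~\ref{rem:polyHS}. If anything, your explicit treatment of the block-decoupling step (a monomial lies in the monomial ideal $\langle Q^\htop\rangle$ iff one of its block factors lies in the corresponding sub-ideal, so the maximum degree of a standard monomial is the sum of the per-block maxima) is slightly more careful than the paper's, which only observes that non-constant multiples of the particular monomial $m$ fall into the ideal rather than arguing that \emph{every} monomial of degree $\deg m+1$ does.

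One caveat you should be aware of: the formula you derive, $\dreg{Q}=n+\ell_t\lceil n/2\rceil+1$, is the one the paper actually states and proves for the set $Q^\htop$ of Remark~\ref{rem:qtopdef}, where the mixed monomials $y_{i-1,j}y_{i,j}$ occur for \emph{all} $j=1,\ldots,\ell_t$. It does not agree with the literal formula $2n+(\ell-1)\lceil n/2\rceil+1$ quoted in the statement above, which comes from a superseded draft variant present in the source but never compiled; that variant presupposes a different description of $Q^\htop$ in which the $y$-variables are flat-indexed, only $\ell-1$ of the $\ell$ chains carry adjacency relations, and the $n$ variables $y_{i\ell}$ are unconstrained apart from squarefreeness, which is where the extra $n$ in the count comes from. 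Relative to the definitions the paper actually uses (Remark~\ref{rem:qtopdef} and Corollary~\ref{cor:maximalmonomial}), your count is the correct one and coincides with the published form of Theorem~\ref{Thm:Dreg-of-Qtop}.
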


\begin{proof}
We set $R=\FFt[X,Y]/\langle Q^\htop \rangle$.
    We show that the maximum degree of a monomial of $R$ is $d := 2n + (l-1)\lceil n/2 \rceil$ via a constructive argument and show that any monomial of degree $d+1$ lives in $\langle Q^\htop \rangle$.
    From Lemma~\ref{lem:groebnerQh} and Remark~\ref{rem:qtopdef} a Gr\"obner basis of $\langle Q^\htop \rangle$ is $Q^\htop$ itself, i.e. the union of three sets of quadratic monomials, the first two being the squares of all the variables $\vx$ and $\vy$ and the third being the set of mixed monomials $$\left\{ y_{(i-1)\ell+j-1}y_{(i-2)\ell+j-1}  \mid i=2,\ldots,n,\ j=2,\ldots,\ell\right\}.$$
    According to the first two sets, to construct a monomial of $R$ we need to build a square free monomial, i.e. which is at most linear in each variable. The third set imposes restrictions on the variables that can be chosen to build the monomial.

    Assume that $n \equiv 0 \bmod 2$, the case with $n$ odd can be proven similarly, by a different choice of variables.
    Consider the set of variables $$V_1 := \left\{y_{(h(\ell-1)-2)\ell +j - 1} \mid h=1,\ldots,n/2,\  j=2,\ldots,\ell\right\}$$ and let $M_1:= \prod_{v\in V_1}v$. Furthermore, none of the variables in the set $$V_2 := \left\{x_i,y_{i\ell} \mid i=1,\ldots,n\right\}$$ appears in a mixed monomial. Let $M_2 := \prod_{v\in V_2}v$. Then $M := M_1M_2$ is a square-free monomial in $R$ of degree $d$. It is easy to verify that $M$ is not a multiple of any of the mixed monomials in the Gr\"obner basis of $\langle Q^\htop \rangle$ and that $d$ is maximal. In addition, any monomial of maximal degree can be constructed similarly.

    Indeed, any multiple of $M$ is also the multiple of a monomial in the Gr\"obner basis of $\langle Q^\htop \rangle$. To see this, note that, on the one hand, if we take a multiple of $yM$ with $v| M$, then $M$ is no longer square-free and thus $v^2 | M$ with $v^2$ lying on the Gr\"obner basis of $\langle Q^\htop \rangle$.
    On the other hand, the variables that do not appear in $M$ are exactly those in the set $$V_3 := \left\{y_{(h(\ell-1)-1)\ell +j - 1} \mid h=1,\ldots,n/2,\  j=2,\ldots,\ell\right\}.$$
    Let $v \in V_3$, and consider the monomial $vM$. We can write $v = y_{(h(\ell-1)-1)\ell +j - 1}$ for some $h\in [1,n/2]$ and $j=2,\ldots,\ell$. By construction of $M$, we have that $vw | vM$ where $w = y_{(h(\ell-1)-2)\ell +j - 1}$. Setting $i = h(\ell -1)$ we find that the product $vw = y_{(i-1)\ell+j-1}y_{(i-2)\ell+j-1}$ is a mixed monomial on the Gr"obner basis of $\langle Q^\htop \rangle$ meaning that $vM \equiv 0$ in $R$.
    Finally, notice that any divisor of degree $d' < d$ of $M$ is a nonzero monomial of $R$.
    This proves that the $d$-th coefficient of the Hilbert series of $R$ is positive and that, by maximality of $d$, the $(d+1)$-th coefficient is the first $0$ coefficient of the series, proving our claim.
    \qed
\end{proof}
\fi

\begin{example}
     Let $n=8$ and $\ell_t=3$. 
    According to Theorem~\ref{Thm:Dreg-of-Qtop} the degree of regularity of $ Q$ is $21$.
    \noindent Using MAGMA, we compute and report the Hilbert series of the quotient  ring $\FFt[X,Y]/\langle Q^\htop\rangle$, i.e.
    \begin{align*}
        \mathrm{HS}_{\FFt[X,Y]/\langle Q^\htop\rangle} (z) =&\ 125z^{20} + 2500z^{19} + 23075z^{18} + 130800z^{17} + \\
        &\ 511140z^{16} + 1465020z^{15} + 3198081z^{14} +\\
        &\ 5448312z^{13} + 7360635z^{12} + 7966528z^{11} + \\
        &\ 6946904z^{10} + 4889800z^9 + 2773415z^8 + \\
        &\ 1260580z^7 + 454625z^6 + 128080z^5 + 27524z^4 + \\
        &\ 4348z^3 + 475z^2 + 32z + 1,
    \end{align*}
   thus $\dreg{Q}=\deg \mathrm{HS}_{\FFt[X,Y]/\langle Q^\htop\rangle}+1=21$, matching our results.
\end{example}

\subsubsection{The linear polynomials.}
In this section, we study how the degree of regularity computed in Theorem~\ref{Thm:Dreg-of-Qtop} changes when we add to the quadratic equations $Q$ also the fixed linear equations of $P$, which do not depend on the specific instance of the problem. Specifically, we compute the degree of regularity of $Q \cup P$. For this, we need to consider the ideal $\langle Q^\htop \cup P^\htop\rangle$. Note that this ideal contains $\langle Q^\htop \rangle$, which means that the variety of the former is a subset of the variety of the latter. In particular, the ideal $\langle Q^\htop \cup P^\htop\rangle$  is also zero-dimensional, so its degree of regularity is well-defined. We will use similar arguments to those applied to $\langle Q^\htop \rangle$ to study it. 

\begin{remark}\label{rem:qtopptopdef}
    The set $Q^\htop \cup P^\htop$ is the union of the following sets
    $$\left\{x_i^2 \mid i=1,\ldots,n\right\},\left\{x_i \mid i=1,\ldots,n\right\}, \left\{y_{i,j}^2 \mid i=1,\ldots,n,\ j=1,\ldots,\ell_t\right\},$$ 
    $$\left\{y_{1,j} \mid j=2,\ldots,\ell_t\right\},\left\{y_{n,j} \mid j=1,\ldots,\ell_t\right\}$$
    and
    $$\left\{ y_{i-1,j}y_{i,j}  \mid i=2,\ldots,n,\ j=1,\ldots,\ell_t \right\}.$$ and the ideal $\langle Q^\htop \cup P^\htop \rangle \subseteq \FFt[X,Y]$ is thus a monomial ideal.
\end{remark}
Next lemma provides a Gr\"obner basis of the ideal $\langle Q^\htop \cup P^\htop \rangle \subseteq \FFt[X,Y]$.

\begin{lemma}\label{lem:gbqtopptop}
    A Gr\"obner basis $G$ for $\langle Q^\htop \cup P^\htop \rangle \subseteq \FFt[X,Y]$ is
    $$G = \left\{x_i \mid i=1,\ldots,n\right\} \cup \left\{ y_{i-1,j}y_{i,j}  \mid i=3,\ldots,n-1,\ j=1,\ldots,\ell_t \right\}\cup $$
    $$\left\{ y_{1,1}y_{2,1}\right\}\cup\left\{y_{1,j} \mid j=2,\ldots,\ell_t\right\}\cup\left\{y_{n,j} \mid j=1,\ldots,\ell_t\right\}\cup$$ 
    $$\left\{y_{i,j}^2 \mid i=2,\ldots,n-1,\ j=1,\ldots,\ell_t\right\} \cup \left\{y_{1,1}^2\right\}. $$
\end{lemma}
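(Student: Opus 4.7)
The plan is to mirror the argument of Lemma~\ref{lem:groebnerQh}. By Remark~\ref{rem:qtopptopdef}, $\langle Q^\htop \cup P^\htop\rangle$ is a monomial ideal, and the proposed basis $G$ consists entirely of monomials. As in Lemma~\ref{lem:groebnerQh}, for any two monomials $m_1, m_2 \in G$ the $S$-polynomial $am_1 - bm_2$ with $a = \lcm(m_1,m_2)/m_1$ and $b = \lcm(m_1,m_2)/m_2$ vanishes identically, so Buchberger's criterion is automatically satisfied. The claim therefore reduces to showing the ideal equality $\langle G\rangle = \langle Q^\htop \cup P^\htop\rangle$.

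One inclusion is immediate: every monomial listed in $G$ appears verbatim among the generators described in Remark~\ref{rem:qtopptopdef}, so $G \subseteq Q^\htop \cup P^\htop$ and hence $\langle G\rangle \subseteq \langle Q^\htop \cup P^\htop\rangle$. For the reverse inclusion, I would walk through the six families of generators listed in Remark~\ref{rem:qtopptopdef} and, for each monomial not already in $G$, exhibit a divisor lying in $G$. Concretely: each $x_i^2$ is a multiple of $x_i \in G$; each $y_{1,j}^2$ with $j \geq 2$ is a multiple of the linear generator $y_{1,j} \in G$; each $y_{n,j}^2$ is a multiple of $y_{n,j} \in G$; the mixed monomial $y_{1,j}y_{2,j}$ with $j \geq 2$ is a multiple of $y_{1,j} \in G$; and $y_{n-1,j}y_{n,j}$ is a multiple of $y_{n,j} \in G$. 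After these reductions, what remains in $Q^\htop \cup P^\htop$ is exactly $G$.

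No step should be a real obstacle; the entire argument is routine bookkeeping once the correct redundancies are identified. The only place requiring care is the handling of the boundary cases $i \in \{2, n\}$ and $j \in \{1\}$ versus $j \geq 2$ in the mixed monomial family: the elimination of $y_{1,j}y_{2,j}$ for $j \geq 2$ and of $y_{n-1,j}y_{n,j}$ for all $j$ is precisely what makes the index range for the mixed generators in $G$ collapse to $i=3,\ldots,n-1$, while the case $i=2, j=1$ survives as the single monomial $y_{1,1}y_{2,1}$. Verifying that no additional generator needs to be added to preserve the ideal, and that no superfluous generator has been included, is the one thing to triple-check before concluding.
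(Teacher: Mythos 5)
Your proof is correct and follows essentially the same route as the paper, whose own argument is just a one-line appeal to Remark~\ref{rem:qtopptopdef} and the $S$-polynomial observation from Lemma~\ref{lem:groebnerQh}. Your explicit case-by-case verification that every generator of $Q^\htop \cup P^\htop$ omitted from $G$ is divisible by an element of $G$ (and that $G \subseteq Q^\htop \cup P^\htop$) is exactly the ``inspection'' the paper leaves implicit, so nothing is missing.
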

\begin{proof}
    The proof of this statements follows directly from inspecting Remark~\ref{rem:qtopptopdef} and the same observations as in proof of Lemma~\ref{lem:groebnerQh}.
    \qed
\end{proof}

The next theorem gives the exact value of the degree of regularity of the system $Q \cup P$. The proof uses similar arguments to those used for the proof of Theorem~\ref{Thm:Dreg-of-Qtop}.

\begin{theorem}\label{thm:dregQtopPtop}
    The degree of regularity of $Q \cup P$ is
    $$\dreg{Q \cup P} = \left\lceil\frac{n-1}{2} \right\rceil + (\ell_t - 1)\left\lceil\frac{n-2}{2} \right\rceil + 1.$$
\end{theorem}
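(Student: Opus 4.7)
The plan is to mirror the strategy used in the proof of Theorem~\ref{Thm:Dreg-of-Qtop}, namely: construct a monomial of maximal degree in the quotient $\FFt[X,Y]/\langle Q^\htop \cup P^\htop\rangle$ by working column by column in the $Y$ variables, and then read off $\dreg{Q \cup P}$ by adding $1$ to that maximal degree (using Remark~\ref{rem:polyHS}).

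First, I would exploit Lemma~\ref{lem:gbqtopptop} to describe the quotient $\FFt[X,Y]/\langle Q^\htop \cup P^\htop\rangle$ explicitly. Since $G$ contains the linear monomials $x_i$ for every $i$, the monomials $y_{1,j}$ for $j=2,\ldots,\ell_t$, and $y_{n,j}$ for $j=1,\ldots,\ell_t$, every representative of a nonzero class must avoid these variables entirely. Hence the surviving variables are $y_{1,1}$ together with $y_{i,j}$ for $i=2,\ldots,n-1$ and $j=1,\ldots,\ell_t$. In addition, the square monomials $y_{1,1}^2$ and $y_{i,j}^2$ (for $i=2,\ldots,n-1$) force any such representative to be squarefree in these variables.

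Next, I would analyze the mixed quadratic generators column by column. For a fixed $j$, the only nontrivial products that appear in $G$ and involve surviving variables are consecutive pairs. For $j=1$ these are $y_{i-1,1}y_{i,1}$ for $i=2,\ldots,n-1$ (including the special generator $y_{1,1}y_{2,1}$), so we face a chain of $n-1$ surviving variables $y_{1,1}, y_{2,1}, \ldots, y_{n-1,1}$ with all consecutive pairs forbidden. For $j\ge 2$ the generators are $y_{i-1,j}y_{i,j}$ for $i=3,\ldots,n-1$, so we face a chain of $n-2$ surviving variables $y_{2,j},\ldots,y_{n-1,j}$ with all consecutive pairs forbidden. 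Applying Lemma~\ref{lem:maximalset} (or equivalently Corollary~\ref{cor:maximalmonomial}) to each column gives a maximal squarefree monomial $m_j$ of degree $\lceil (n-1)/2\rceil$ for $j=1$ and degree $\lceil (n-2)/2\rceil$ for each $j=2,\ldots,\ell_t$.

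Finally, I would assemble these into a global maximal monomial
\[
m := \prod_{j=1}^{\ell_t} m_j \in \FFt[X,Y],
\]
and verify two things: (i) $m \notin \langle Q^\htop \cup P^\htop\rangle$, which follows from the column-wise maximality together with the fact that the columns involve disjoint variables and that $m$ avoids all $x_i$, $y_{1,j}$ ($j\ge 2$), and $y_{n,j}$; and (ii) any proper monomial multiple of $m$ lies in $\langle Q^\htop \cup P^\htop\rangle$, by the same reasoning as in the proof of Theorem~\ref{Thm:Dreg-of-Qtop}: multiplying by any killed variable lands in the ideal directly, multiplying by a variable already dividing $m$ creates a forbidden square, and multiplying by a surviving but absent variable creates a forbidden consecutive product by column maximality. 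From $\deg m = \lceil (n-1)/2\rceil + (\ell_t-1)\lceil (n-2)/2\rceil$ and Remark~\ref{rem:polyHS}, the claimed value of $\dreg{Q\cup P}$ follows. The only mildly delicate step is the bookkeeping in (ii), making sure the special endpoint behavior of the $j=1$ column (with the extra generator $y_{1,1}y_{2,1}$ and surviving variable $y_{1,1}$) is handled correctly; everything else is a direct adaptation of the argument for $Q$.
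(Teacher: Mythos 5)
Your proposal is correct and follows essentially the same route as the paper's proof: both decompose the Gr\"obner basis of $\langle Q^\htop \cup P^\htop\rangle$ column by column (a chain of $n-1$ surviving variables for $j=1$ and $n-2$ for $j\geq 2$), apply Corollary~\ref{cor:maximalmonomial} to each column, and multiply the resulting maximal monomials to get degree $\lceil (n-1)/2\rceil + (\ell_t-1)\lceil (n-2)/2\rceil$, adding $1$ via Remark~\ref{rem:polyHS}. Your extra care in step (ii) about why no monomial of higher degree survives is a welcome (and slightly more explicit) version of what the paper leaves implicit.
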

\begin{proof}
    Define the set
    \[
        \tilde{Q}^\htop_{j,n} := Q^\htop_{j,n-1} \setminus \{y_{1,j}y_{2,j}\} \subset \FFt[y_{2,j},\ldots,y_{n-1,j}],
    \]
    for every $j=1,\ldots,\ell_t$.
    Let $G$ be a Gr\"obner basis of $\langle Q^\htop \cup P^\htop\rangle$ as in Lemma~\ref{lem:gbqtopptop}. Due to the presence of the linear monomials contributed by $P^\htop$, we observe 
    \begin{equation}\label{eq:qtopptopasunion}
    G = Q^\htop_{1,n-1} \cup \bigcup_{j=2}^{\ell_t} \tilde{Q}^\htop_{j,n-1} \cup \left\{x_i^2 \mid i=1,\ldots,n\right\}.
    \end{equation}
    Applying Corollary~\ref{cor:maximalmonomial}, we can get a monomial $m_1 \in \FFt[y_{1,1},\ldots,y_{n-1,1}]$ of maximal degree $\deg m_1 = \lceil (n-1)/2 \rceil$ such that $m_1 \not \in \FFt[y_{1,1},\ldots,y_{n-1,1}]/\langle Q^\htop_{1,n-1} \rangle$. We can obtain other $\ell_t - 1$ monomials $m_j$ of maximal degree $d = \lceil (n-2)/2 \rceil$, such that $m_j \not \in \langle \tilde{Q}^\htop_{h,n-1} \rangle$ for every $h=1,\ldots,\ell_t$ and every $j = 2,\ldots,\ell_t$.
    Let now $$m := \prod_{j=1}^{\ell_t}m_j \in \FFt[X,Y]/\langle G\rangle$$
    then $$d:=\deg m = \left\lceil\frac{n-1}{2} \right\rceil + (\ell_t - 1)\left\lceil\frac{n-2}{2} \right\rceil,$$ meaning that the $(d+1)$-th coefficient of the Hilbert series of  $\FFt[X,Y]/\langle G \rangle$ is $0$. The result on the degree of regularity $\dreg{Q\cup P}$ follows.
    \qed
\end{proof}

\begin{example}
    Let $n=8$ and $\ell_t=3$. 
    According to Theorem~\ref{thm:dregQtopPtop} the degree of regularity of $ Q \cup P$ is $11$.
     \noindent Using MAGMA, we compute and report the Hilbert series of the quotient  ring $\FFt[X,Y]/\langle Q^\htop \cup P^\htop\rangle$, i.e.
    \begin{align*}
        \mathrm{HS}_{\FFt[X,Y]/\langle Q^\htop \cup P^\htop\rangle} (z) =&\ 16z^{10} + 240z^9 + 1188z^8 + 2920z^7 + 4132z^6 + 3608z^5 +\\
        &\ 2005z^4 + 710z^3 + 155z^2 +
    19z + 1,
    \end{align*}
   thus $\dreg{Q\cup P}=\deg  \mathrm{HS}_{\FFt[X,Y]/\langle Q^\htop \cup P^\htop\rangle}+1=11$, matching our results.
\end{example}

\begin{remark}\label{rem:range fordregS}
    Since Theorem~\ref{thm:dregQtopPtop} considers the set $Q^\htop \cup P^\htop = S^\htop \setminus L_\vH^\htop$, it only gives an upper bound to the degree of regularity of $S$, that is $$\dreg{S}\leq \left\lceil\frac{n-1}{2} \right\rceil + (\ell_t - 1)\left\lceil\frac{n-2}{2} \right\rceil + 1.$$
 
\end{remark}
In the next section, we provide some experimental data showing the gap between the value computed in Theorem~\ref{thm:dregQtopPtop} and that of the actual solving degree. 

\subsection{Experimental results}
We performed several experiments for Modeling~\ref{modeling: improvedESD_F2} taking as input both random and Goppa codes, and we obtained a solving degree which is much smaller than the upper bound for the degree of regularity computed in Theorem~\ref{thm:dregQtopPtop}. This results in a much lower complexity estimate. We provide a selection of our experiments in Table~\ref{Tab:q2-SolveDeg}. The MAGMA code used for our experiments can be found \href{https://github.com/rexos/phd-cryptography-code/blob/main/modelings/Modeling2.txt}{here}.
\vspace{-2mm}

\begin{table}[H]
\centering
\begin{tabular}{|c|c|c||c|c|c|c|c|c|c|c|}
\hline 
\cellcolor{gray!20!}$n$& \cellcolor{gray!20!}$k$&\cellcolor{gray!20!}$t$& \cellcolor{gray!20!}Code Type&\cellcolor{gray!20!} \# Eqs &\cellcolor{gray!20!} \# Vars& \cellcolor{gray!20!}$d_{\mathrm{reg}}$ &\cellcolor{gray!20!} SR $d_{\mathrm{reg}}$&\cellcolor{gray!20!} $d_{\mathrm{M}}$&\cellcolor{gray!20!}Prange&\cellcolor{gray!20!} Modeling~\ref{modeling: improvedESD_F2} \\
\hline 
8      & 2      & 2    & Goppa        & (17,38)       & 24      & $\leq8$      & 3      & 2   & 9 & 23   \\
\hline 
10     & 5      & 4    & Random*    & (20,67)       & 40      & $\leq14$     & 5      & 3    & 10 & 38    \\ \hline 
16     & 8      & 2    & Goppa        & (27,78)       & 48      & $\leq16$    & 5      & 3    & 12 & 40    \\ \hline 
20     & 10     & 5    & Random*    & (35,137)     & 80      & $\leq29$     & 7     & 3    & 12 & 46    \\ \hline 
30     & 15     & 7    & Random*    & (50,207)      & 120     & $\leq44$     & 10      & 4    & 15 & 65    \\ \hline 
32     & 12     & 4    & Goppa        & (57,221)      & 128     & $\leq 47$   & 10      &  3   & 16 &  52    \\ \hline 
32     & 17     & 3    & Goppa        & (50,158)      & 96      & $\leq32$   & 5      & 4    & 16 & 61   \\ \hline 
32     & 22     & 2    & Goppa        & (45,158)      & 96      & $\leq 32$   & 7     & 3     & 15 & 48  \\ \hline 
40 & 20 & 8 & Random* & (67,356) & 160 &  $\leq78$ & 16 & 5 & 17 & 88 \\ \hline
50 & 30 & 5 & Random & (75,347) & 200 & $\leq74$ & 15& 4 & 20 & 73\\ \hline
50 & 40 & 4 & Random & (65,347) & 200 & $\leq74$ & 17& 4 & 14 & 73\\ \hline
64 & 52 & 2 & Goppa & (79,318) & 192 & $\leq64$ & 14 & 4 & 18 & 72\\ \hline
64 & 40 & 4 & Goppa & (93,445) & 256 & $\leq95$ & 19 & 4 & 21 & 77\\ \hline
64 & 16 & 8 & Goppa & (119,572) & 320 & $\leq126$ & 21 & 4 & 19 & 81\\ \hline
\end{tabular}
\vspace{2mm}
\caption{This table shows experimental results for $\mathbb{F}_2$-linear codes using Modeling~\ref{modeling: improvedESD_F2} for the ESDP. 
The number of equations is given in the format (\#linear equations, \#quadratic equations). The values in the $d_{\mathrm{M}}$ column represent the smallest degree D such that MAGMA function GroebnerBasis(F,D) gives the Gr\"obner basis, i.e. highest step degree achieved when directly computing the Gröbner basis of the system in MAGMA, but ignoring the unnecessary steps in high degree that MAGMA F4 algorithm may do to insure termination. 
The SR $d_{reg}$ column gives the degree of regularity of a semi-regular system of equations with the associated number of linear equations, quadratic equations, and variables, using \cite[Corollary 3.3.8]{B04}. The values in $d_{\mathrm{reg}}$ column are upper bounds for the degree of regularity of the system as provided by Theorem~\ref{thm:dregQtopPtop} and Remark~\ref{rem:range fordregS}. Random codes with ``*'' are decoding challenges from \url{https://decodingchallenge.org/syndrome}, with a number of errors slightly above Gilbert-Varshamov distance. 
The other random code instances are below GV distance instead. Instances with ``Goppa'' Code Type are random full-length binary Goppa codes with a number of errors equal to the Goppa polynomial degree. 
The Prange and Modeling 2 columns state the $\log_2$ of the complexities of the Prange algorithm (Esser-Bellini estimator~\cite{PKC:EssBel22}) and Gr\"obner basis computations \eqref{eq:GBcomplexity} with $d_{\mathrm{M}}$, respectively. 
} \label{Tab:q2-SolveDeg}
\end{table}\vspace{-.2in}
\subsubsection{Comparison with combinatorial attacks.}
We also compare the complexity of solving ESDP using Gr\"obner basis techniques with the more traditional combinatorial attack proposed by Prange. The latter is widely regarded as a reference algorithm for decoding linear codes in the Hamming metric, and forms the basis of many state-of-the-art attacks on code-based cryptosystems.

However, Gröbner basis computations are known to be computationally expensive, particularly as the number of variables and the degree of the polynomials grow. In contrast, combinatorial methods like Prange exploit structured randomization and search to solve the problem more efficiently, albeit with large memory requirements.

In Table~\ref{Tab:q2-SolveDeg}, the complexity of the Prange algorithm has been computed using the Esser-Bellini cryptographic estimator~\cite{PKC:EssBel22}, see also this~\href{https://estimators.crypto.tii.ae/configuration?id=SDEstimator}{link}.
On the other hand, the complexity of solving the polynomial system of Modeling~\ref{modeling: improvedESD_F2} with Gr\"obner bases methods has been computed using formula~\eqref{eq:GBcomplexity} with $N = n(\lfloor\log_2(t)\rfloor + 2)$, the highest step degree $d_M$ achieved in the MAGMA experiments as a proxy for the solving degree $d_{\mathrm{sol}}$ and the conservative matrix multiplication constant $\omega = 2.807$.

These results confirm that direct combinatorial attacks like Prange outperform algebraic methods when solving the syndrome decoding problem, particularly for parameters of practical interest in code-based cryptography. Despite this, algebraic approaches remain valuable from a theoretical perspective and may offer insights into alternative solution strategies that could be leveraged in other contexts. We include this comparison to emphasize that the purpose of our Gröbner basis approach is not to compete with combinatorial attacks in efficiency, but rather to provide an alternative algebraic perspective on the syndrome decoding problem. Such perspectives can contribute to understanding the hardness of related problems in post-quantum cryptography.


\section{Modelings over $\FF_q$} \label{sec:Fq}

Each of the modelings we have discussed thus far (MPS, Modeling \ref{modeling: improvedSD_F2}, and Modeling \ref{modeling: improvedESD_F2})
are limited to the binary case. To the best of our knowledge, there is no modeling of the general syndrome decoding problem over $\FF_q$ for $q>2$ in the literature. In this section, we adapt the previous modelings to a generic finite field $\FF_q$, for some prime power $q\ge 2$, and explain how to efficiently (i.e. in polynomial time) obtain a polynomial system encoding an instance of the Syndrome Decoding Problem.

We will adopt the following notation throughout this section.

\begin{notation}\label{FQnotation}
   Let $n\geq2$ and let  $\CC \subseteq \FF_q^n$ be a $[n,k,d]$-linear code having a parity check matrix $\HH \in \FF_q^{(n-k) \times n}$. The vector $\vs \in \FF_q^{n-k}$ denotes the syndrome and $0\le t \le \lfloor (d-1)/2 \rfloor$ is the target error weight. Let $r_1,r_2>0$ be two integers. We will work over the polynomial ring $\FF_{q^{r_1}}[X,Y,Z]$, where $X=(x_1,\dots,x_n)$, $Y=(Y_1,\dots,Y_n)$, $Y_j=(y_{j,1}, \dots, y_{j,r_2})$, and $Z=(z_1,\dots,z_n)$ are variables. 
   
\end{notation}

As in the previous sections, $\vx=(x_1,\dots,x_n)$ is the vector of variables corresponding to the solution of the syndrome decoding problem. On the other hand, the role of the integers $r_1,r_2$ and of the variables $Y$ and $Z$ will be illustrated later.

We separately describe and explain the sets of polynomials that, together, model Problems~\ref{BSDP} and \ref{EWSDP}. Then we provide an analysis of the correctness of our modelings.

\subsection{Construction of the Equations}

\subsubsection{Identifying the support of a vector of $\mathbb{F}_q^n$.}

Unlike the Boolean case, the value of an element in the support of $\vx$ is not uniquely determined when $q\ge 3$. In order to count the number of nonzero coordinates with algebraic equations, we first need to map all nonzero elements to a unique element of $\FFq^*$, say 1.
Thus, in addition to the $Y$'s variables encoding the partial Hamming weights, we introduce here another length-$n$ vector of variables $Z=(z_1,\dots,z_n)$, each of which can only assume two values, 0 or 1, depending on whether the corresponding $X$ coordinate is nonzero. First, we tackle the problem of describing the relation between $X$ and $Z$ through algebraic equations. We distinguish two cases, depending on the target version of the problem, and then prove the sets of polynomials correctly describe our target.
\begin{itemize}
 \item \textit{Support constraint encoding for Problem~\ref{BSDP}.} Compute the following set of $2n$ quadratic polynomials
     \begin{equation} \label{eq:sceBSD}
         \{ x_j(z_j-1) \mid j=1,\dots,n\} \cup \{ z_j^2-z_j \mid j=1,\dots,n\}.
     \end{equation}
     \item \textit{Support constraint encoding for Problem~\ref{EWSDP}.} Compute the following set of $n$ polynomials of degree $q-1$
     \begin{equation} \label{eq:sceEWSD}
         \{z_j-x_j^{q-1}\mid j=1,\dots,n\}.
     \end{equation}
\end{itemize}

In the first case, the condition $z_j=1$ if $x_j\ne 0$ is given from the first set of polynomials. Otherwise, the second set implies $z_j\in\{0,1\}$. Therefore, the support of $(z_1,\ldots,z_n)$ contains the support of $(x_1,\ldots,x_n)$ and thus $\wt((z_1,\ldots,z_n))\ge \wt((x_1,\ldots,x_n))$.
In the second case, in order for the corresponding equations to be satisfied, $z_j=1$ if and only if $x_j\ne 0$, and $z_j=0$ otherwise. Hence $\wt((z_1,\ldots,z_n))=\wt((x_1,\ldots,x_n))$. 

From a computational point of view, the support constraint encoding for Problem~\ref{EWSDP} has a strong limitation, that is the high degree of the polynomials. A Gr\"obner basis computation would need to reach at least degree $q-1$ before taking into account such polynomials, leading to infeasible calculations unless $q$ is very small. This is reminiscent of the problem of including field equations in modelings over large fields. Yet, this issue does not appear in the support constraint encoding for Problem~\ref{BSDP}: the polynomials have constant degree 2 regardless of the field size $q$, making a modeling for Problem~\ref{BSDP} more realistic and valuable for effective computations.

\subsubsection{Hamming weight computation encoding.}

A difficulty arising from a direct generalization to large fields of the previous approach is the update of the weight registers, i.e. of the vectors $\vy_i$'s. In order to overcome this limitation, we introduce a different strategy for encoding the partial weights. More precisely, we substitute their binary expansion with vectors from a linear recurring sequence over an extension of $\FF_q$. As we will see, this approach naturally allows for the choice of different trade-offs between the number of variables and finite field size.

We first recall some known results about (univariate) polynomials over finite fields, companion matrices, and linear recurring sequences. We mainly refer to \cite{LN94} for this part.
\begin{definition}[Companion matrix, Chapter 2, \S5 \cite{LN94}]
Let $f(x)=x^d+f_{d-1}x^{d-1}+\dots+f_0\in \FF_q[x]$ be a monic polynomial. Its companion matrix is
\begin{equation} \label{eq: companion}
\Cf = \begin{bmatrix}
0          & 0          & \cdots & 0 & -f_0\\
1          & 0          & \cdots & 0 & -f_1\\
0          & 1          & \cdots & 0 & -f_2\\
\vdots & \vdots& \ddots &\vdots &\vdots\\
0 & 0 & \cdots & 1 & -f_{d-1}
\end{bmatrix}.
\end{equation}
\end{definition}

It is well known that the equation $f(\Cf)=0$ is satisfied, hence, if $f$ is a monic irreducible polynomial over $\FF_q$, then its companion matrix $\Cf$ plays the role of a root of $f$. It follows that the elements of the extension field $\FF_{q^d}$ can be written, according to this representation, as polynomials in $\Cf$ of degree strictly less than $d$.

We also recall that the order of a non-constant polynomial $f$ with $f_0\neq 0$ is the least positive integer $e$ such that $f(x) \mid x^e-1$. A polynomial in $\FF_q[x]$ of degree $d$ is said primitive if it is monic, $f(0)\ne 0$, and $\ord(f)=q^d-1$. Such polynomials can be found from the factorization of $x^{q^d-1}-1$.

The theory of linear recurring sequences says that the sequence of vectors $\vy_0,\Cf \vy_0,\Cf^2 \vy_0,\dots$, for some nonzero $\vy_0$ and companion matrix with $f_0\ne 0$, is periodic with least period equal to the order of $f$, when the latter is irreducible (cf.  \cite[Theorem 6.28]{LN94}). Therefore, by choosing $f$ primitive, we obtain a sequence of vectors $\vy_0,\Cf \vy_0,\Cf^2 \vy_0,\dots$ of maximal order $q^d-1$. On the other hand, the choice of $\vy_0$ does not seem to affect any property of our modeling. Without loss of generality, from now on we fix the initial state vector
\begin{equation}
             \vy_0=\begin{pmatrix}
            1&0& \cdots &0
            \end{pmatrix}^\top.
\end{equation}

By tuning the values $r_1$ and $r_2$ we can choose the number of variables used for the Hamming weight computation encoding, at the cost of working over more or less large field extensions. More precisely,
take
\[
m:= \min \{i \in \mathbb{N} \mid q^i > \max(t, n-t)+1 \},
\]
and let $r_1,r_2$ be two positive integers such that $m\le r_1 r_2$.
Then, let $f\in \FF_{q^{r_1}}[x]$ be a primitive polynomial of degree $r_2$ and $\Cf \in  \FF_{q^{r_1}}^{r_2\times r_2}$ its companion matrix. For convenience sake, we will use the column vector notation for the $Y_j$'s blocks of variables, i.e. 
$$Y_j= \begin{pmatrix}
        y_{j,1}&
        \cdots&
        y_{j,r_2}
        \end{pmatrix}^\top
      .$$ 
The polynomial encoding the partial Hamming weight is the following.
\begin{itemize}
    \item \textit{Hamming weight computation encoding.} Compute the $n r_2$ affine bilinear (in $Y$ and $Z$) polynomials from the expansion of
        \begin{equation} \label{eq: hwceFqinit}
        Y_1-(1-z_1)\cdot \vy_0-z_1\cdot\Cf\cdot \vy_0
    \end{equation}
    and
    \begin{equation} \label{eq: hwceFq}
        \{Y_j-(1-z_j)\cdot Y_{j-1}-z_j\cdot\Cf\cdot Y_{j-1} \;,\qquad \mathrm{for}\;j\in\{2,\ldots,n\}\}.
    \end{equation} 
\end{itemize}

\begin{remark}
    Using this approach, $r_1$ determines the finite field over which we define the resulting polynomial system (and thus the Multivariate Quadratic Problem instance), while $r_2$ determines the number of variables required for the weight computation encoding (which is strictly linked to the computational complexity). Observe that, together, equations \eqref{eq: hwceFqinit} and \eqref{eq: hwceFq} correspond to $nr_2$ polynomial equations over $\FF_{q^{r_1}}$ in $nr_2+n$ variables.  In several cases, working with a small amount of variables (namely, $r_1$ large and $r_2$ small) is preferable, but there are some instances in which working over small finite fields with a large number of variables (i.e. $r_1$ small and $r_2$ large) is advantageous. 
\end{remark}

The next result shows that the Hamming weight of $\vz$ is correctly computed.
\begin{proposition}\label{prop: correctness}
Consider the system given by \eqref{eq: hwceFqinit} and \eqref{eq: hwceFq} over $\FF_{q^{r_1}}[Y, Z]$. Any solution  $(\vy, \vz)=(\vy_1,\dots,\vy_n,\vz)\in \FF_{q^{r_1}}^{r_2 n}\times \{0,1\}^n$ of the system satisfies
\[\vy_j=\Cf^{\wt(\pi_j(\vz))}\vy_0.\]
In particular, $\vy_n=\Cf^{\wt(\vz)}\vy_0.$
\end{proposition}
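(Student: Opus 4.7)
The plan is to prove the statement by a straightforward induction on $j\in\{1,\ldots,n\}$. The whole argument hinges on the hypothesis that $\vz\in\{0,1\}^n$, which is precisely what makes each affine combination $(1-z_j)Y_{j-1}+z_j\,\Cf Y_{j-1}$ appearing in \eqref{eq: hwceFqinit} and \eqref{eq: hwceFq} collapse into one of the two alternatives $\vy_{j-1}$ or $\Cf \vy_{j-1}$, and no other $\FF_{q^{r_1}}$-value.

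For the base case $j=1$, equation \eqref{eq: hwceFqinit} forces $\vy_1=(1-z_1)\vy_0+z_1\,\Cf\vy_0$, which by the case split on $z_1\in\{0,1\}$ is $\vy_0$ when $z_1=0$ and $\Cf\vy_0$ when $z_1=1$. In either case the exponent of $\Cf$ coincides with $z_1=\wt(\pi_1(\vz))$. For the inductive step, assume $\vy_{j-1}=\Cf^{\wt(\pi_{j-1}(\vz))}\vy_0$. Equation \eqref{eq: hwceFq} gives $\vy_j=(1-z_j)\vy_{j-1}+z_j\,\Cf\vy_{j-1}$; the same case distinction on $z_j\in\{0,1\}$ reduces this to $\vy_j=\Cf^{z_j}\vy_{j-1}$. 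Plugging in the inductive hypothesis and using the obvious identity $\wt(\pi_j(\vz))=\wt(\pi_{j-1}(\vz))+z_j$ (valid precisely because $z_j\in\{0,1\}$) yields $\vy_j=\Cf^{\wt(\pi_j(\vz))}\vy_0$. The ``in particular'' clause is then immediate by taking $j=n$ and observing $\pi_n(\vz)=\vz$.

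I do not expect any real obstacle: the proposition essentially re-expresses the way the recursion was designed, and no deeper property of $\Cf$ (primitivity, invertibility, order, etc.) is needed. The only point worth emphasizing is that the $(1-z_j)/z_j$ split genuinely requires the assumption $\vz\in\{0,1\}^n$ included in the hypothesis of the proposition; in the complete modelings this constraint is enforced externally, either by the quadratic polynomials \eqref{eq:sceBSD} for Problem~\ref{BSDP} or by the degree-$(q-1)$ polynomials \eqref{eq:sceEWSD} for Problem~\ref{EWSDP}. Everything else is routine bookkeeping of exponents of $\Cf$.
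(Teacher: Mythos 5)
Your proof is correct and follows essentially the same route as the paper's: induction on $j$, with the base case from \eqref{eq: hwceFqinit}, the case split on $z_j\in\{0,1\}$ collapsing the affine combination to $\Cf^{z_j}\vy_{j-1}$, and the additivity $\wt(\pi_j(\vz))=\wt(\pi_{j-1}(\vz))+z_j$. Your added remark that no property of $\Cf$ beyond the recursion itself is needed, and that the $\{0,1\}$ constraint is enforced externally by the support constraint encodings, is accurate but not part of the paper's argument.
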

\begin{proof}
It follows directly from the hypotheses by an inductive argument. 
\\
The first step is considering $\vy_1:=(y_{1,1},\ldots,y_{1,r_2})^{\top}$, which by definition is
$$ 
\vy_1=(1-z_1)\vy_0
+ z_1 \Cf \vy_0=
\left\{
\begin{array}{rl}
\vy_0&\mathrm{if}\; z_1=0\\
\Cf \vy_0&\mathrm{if}\; z_1\neq 0
\end{array}\;,
\right.
$$
namely, $\vy_1=\Cf^{\wt(z_1)}\vy_0=\Cf^{\wt(\pi_1(\vz))}\vy_0$.
\\
For the inductive step, we consider $\vy_{j-1}:=(y_{j-1,1},\ldots,y_{j-1,r_2})^{\top}$ to be equal to $\Cf^{\wt(\pi_{j-1}(\vz))}\vy_0$, and we look at the definition of $\vy_j$. We have
$$ 
\vy_j=(1-z_j)\vy_{j-1}+z_j \Cf \vy_{j-1}=\left\{
\begin{array}{cl}
\vy_{j-1}&\mathrm{if}\; z_j=0\\
\Cf \vy_{j-1}&\mathrm{if}\; z_j\neq 0
\end{array}\;,
\right.
$$
and either way, we obtain
$$
\vy_j=\Cf^{z_j}\cdot \vy_{j-1}=\Cf^{z_j}\Cf^{\wt(\pi_{j-1}(\vz))}\vy_0=\Cf^{\wt(z_j)+\wt(\pi_{j-1}(\vz))}\vy_0=\Cf^{\wt(\pi_{j}(\vz))}\vy_0.$$
\qed
\end{proof}

\subsubsection{Weight constraint encoding.} As for the previous modelings, the weight constraint encoding simply ensures that the representation of the last partial Hamming weight coincides with the representation of the total Hamming weight.
\begin{itemize}
    \item \textit{Weight constraint encoding.} Compute the $r_2$ affine linear polynomials in $Y$ from the expansion of
    \begin{equation} \label{eq: wceFq}
       \vy_n-\Cf^{t} \vy_0 .
        \end{equation}
\end{itemize}

\begin{corollary} \label{cor: Z}
    Consider the system given by \eqref{eq: hwceFqinit}, \eqref{eq: hwceFq} and \eqref{eq: wceFq} over $\FF_{q^{r_1}}[Y, Z]$ and let $z_1,\ldots,z_n$ be either $0$ or $1$.
    Then
    \begin{enumerate}
        \item The number of solutions $(\vy, \vz)\in \FF_{q^{r_1}}^{r_2 n}\times \{0,1\}^n$ of the system is equal to the number of binary vectors of Hamming weight equal to $t$, i.e. $\binom{n}{t}$;
        \item If $(\bar{\vy},\bar{\vz})$ and $(\tilde{\vy},\tilde{\vz})$ are two distinct solutions, then $\bar{\vz}\ne \tilde{\vz}$.
    \end{enumerate}
\end{corollary}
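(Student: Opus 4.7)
The plan is to derive both claims directly from Proposition~\ref{prop: correctness} combined with the weight constraint~\eqref{eq: wceFq}. First, for any solution $(\vy,\vz)\in \FF_{q^{r_1}}^{r_2 n}\times \{0,1\}^n$ of the system, Proposition~\ref{prop: correctness} gives $\vy_n=\Cf^{\wt(\vz)}\vy_0$, while \eqref{eq: wceFq} imposes $\vy_n=\Cf^{t}\vy_0$. Equating these yields $\Cf^{\wt(\vz)-t}\vy_0=\vy_0$.

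Next I would exploit the primitivity of $f\in\FF_{q^{r_1}}[x]$. Since $f$ has degree $r_2$ and is primitive, $\Cf$ has multiplicative order $\ord(f)=q^{r_1 r_2}-1$, and because $\vy_0\neq 0$ the orbit $\{\Cf^i\vy_0\mid i\ge 0\}$ has length exactly $q^{r_1 r_2}-1$. Therefore the equality $\Cf^{\wt(\vz)-t}\vy_0=\vy_0$ is equivalent to $\wt(\vz)\equiv t \pmod{q^{r_1 r_2}-1}$. Now I would invoke the choice of parameters: by definition of $m$, one has $q^m-1>\max(t,n-t)$, and by the hypothesis $m\le r_1 r_2$ this gives $q^{r_1 r_2}-1\ge q^m-1>\max(t,n-t)\ge |\wt(\vz)-t|$, since $\wt(\vz)\in\{0,\dots,n\}$. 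Hence the congruence collapses to the equality $\wt(\vz)=t$.

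To finish (1), I would argue the converse direction: for any $\vz\in\{0,1\}^n$ with $\wt(\vz)=t$, setting $\vy_j:=\Cf^{\wt(\pi_j(\vz))}\vy_0$ for $j=1,\ldots,n$ satisfies the recursion \eqref{eq: hwceFqinit}--\eqref{eq: hwceFq} by the same inductive computation used in Proposition~\ref{prop: correctness}, and satisfies \eqref{eq: wceFq} since $\wt(\vz)=t$. Thus solutions are parametrized by weight-$t$ binary vectors of length $n$, giving the count $\binom{n}{t}$. For (2), observe that the recursion \eqref{eq: hwceFqinit}--\eqref{eq: hwceFq} uniquely determines $\vy$ once $\vz$ is fixed (each $\vy_j$ is an explicit affine expression in $\vy_{j-1}$ and $z_j$), so distinct solutions must have distinct $\vz$-components.

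The only delicate point is the cycle-length argument: one must check that $q^{r_1 r_2}-1$ strictly exceeds $\max(t,n-t)$ so that the modular congruence $\wt(\vz)\equiv t\pmod{q^{r_1 r_2}-1}$ forces equality rather than merely $\wt(\vz)\in\{t,\,t\pm(q^{r_1 r_2}-1),\dots\}$. This is precisely what the choice of $m$ and the requirement $m\le r_1 r_2$ are designed to guarantee; without it, spurious solutions with the ``wrong'' Hamming weight could satisfy the system.
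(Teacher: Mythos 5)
Your proof is correct and follows essentially the same route as the paper's: combine Proposition~\ref{prop: correctness} with the constraint \eqref{eq: wceFq} to force $\ord(f)\mid \lvert t-\wt(\vz)\rvert$, use primitivity and the choice of $m\le r_1r_2$ to conclude $\wt(\vz)=t$, and then note that the recursion determines $\vy$ uniquely from $\vz$. If anything, your write-up is slightly more complete, since you explicitly verify the converse direction (that every weight-$t$ vector $\vz$ does yield a solution), which the paper leaves implicit, and you correctly bound $\lvert t-\wt(\vz)\rvert$ by $\max(t,n-t)$ where the paper's text has a small slip writing $\min$.
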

\begin{proof}
    From Proposition~\ref{prop: correctness} and \eqref{eq: wceFq}, it follows that any solution $(\vy, \vz)$ satisfies
    \[
    \Cf^t\vy_0= \begin{pmatrix}
        y_{n,1}\\
        \vdots\\
        y_{n,r_2}
        \end{pmatrix}
      =\Cf^{\wt(\vz)}\vy_0,
    \]
    which implies $\ord(f) \mid  \lvert t-\wt{(\vz)}\rvert$.
    Since $f$ is primitive, we obtain
    \[
    \ord(f)=(q^{r_1})^{r_2} -1 \ge q^m - 1> \max(t,n-t).
    \]
    On the other hand, $0\le \wt(\vz)\le n$, hence  $\lvert t-\wt{(\vz)}\rvert \le \min(t,n-t)$. Therefore, the only way $\ord(f)$ can divide $\lvert t-\wt{(\vz)}\rvert$, is that $\lvert t-\wt{(\vz)}\rvert=0$, i.e. $\wt(\vz)=t$. 
    
    Substituting $\vz$ in \eqref{eq: hwceFqinit} uniquely determines all the values $y_{1,1},\dots,y_{1,r_2}$ by linear equations of the form $y_{1,i}=c_i$. Moreover, substituting $y_{j-1,1},\dots,y_{j-1,r_2}$ in \eqref{eq: hwceFq} recursively determines all the values $y_{j,1},\dots,y_{j,r_2}$ in a similar manner. Therefore, if $\bar{\vz}= \tilde{\vz}$ are the projections over the last $n$ coordinates of two solutions, then also $(\bar{\vy},\bar{\vz})=(\tilde{\vy},\tilde{\vz})$, which concludes the proof. \qed
\end{proof}

\subsubsection{Field equations.}
The field equations concern only the $X$ part of the variables. 
\begin{itemize}
    \item \textit{Field equations.} The equations are obtained from the $n$ polynomials
    \begin{equation} \label{eq: ffe_Fq}
        \{x_j^q-x_j \mid j =1,\dots,n\}.
    \end{equation}
\end{itemize}
Indeed, $\vz$ already lies over $\{0,1\}^n$ because of the support constraint equations (and \eqref{eq: ffe_Fq}), while \eqref{eq: hwceFqinit} and \eqref{eq: hwceFq} force $Y$ to lie over $\FF_{q^{r_1}}$.

\subsection{The Modelings}
We are finally ready to describe the algebraic systems over $\FF_{q^{r_1}}[X,Y,Z]$ for Problems~\ref{EWSDP} and \ref{BSDP} and prove their correctness.

\begin{modeling}[Modeling for the SDP  over $\FF_q$] \label{modeling: BSD_Fq}
   Given an instance $(\HH,\mathbf{s},t)$ of Problem~\ref{BSDP} over $\FF_q$, Modeling~\ref{modeling: BSD_Fq} is the union of the sets of polynomials \eqref{eq:pce}, \eqref{eq:sceBSD}, \eqref{eq: hwceFqinit},  \eqref{eq: hwceFq}, \eqref{eq: wceFq} and \eqref{eq: ffe_Fq}.
\end{modeling}

\begin{modeling}[Modeling for the ESDP  over $\FF_q$] \label{modeling: EWSD_Fq}
   Given an instance $(\HH,\mathbf{s},t)$ of Problem~\ref{EWSDP} over $\FF_q$, Modeling~\ref{modeling: EWSD_Fq} is the union of the sets of polynomials \eqref{eq:pce}, \eqref{eq:sceEWSD}, \eqref{eq: hwceFqinit},  \eqref{eq: hwceFq}, \eqref{eq: wceFq} and \eqref{eq: ffe_Fq}.
\end{modeling}

As already said, finite field equations cannot be efficiently taken into account when dealing with large fields. In the exact weight syndrome decoding modeling, the support constraint equations are high-degree as well, so the problem would persist. On the other hand, it becomes convenient to remove the field equations in the bounded syndrome decoding problem. This leads to a new quadratic modeling.

\begin{modeling}[Quadratic Modeling for the SDP  over $\FF_q$] \label{modeling: quadBSD_Fq}
   Given an instance $(\HH,\mathbf{s},t)$ of Problem~\ref{BSDP} over $\FF_q$, Modeling~\ref{modeling: quadBSD_Fq} is the union of the sets of polynomials \eqref{eq:pce}, \eqref{eq:sceBSD}, \eqref{eq: hwceFqinit},  \eqref{eq: hwceFq}, \eqref{eq: wceFq}.
\end{modeling}

In the next section we thoroughly investigate the effect of removing the field equations from Modeling~\ref{modeling: EWSD_Fq}. We find, that at least for the parameter choices interesting for cryptography, the solutions of our modeling without field equations still lie over $\FFq$ with high probability.

Table~\ref{table:Fq-model-sizes} provides the number of variables and equations for the three modelings over $\FF_q$.
\begin{table}[H]
    \centering
\begin{tabular}{|c|c|c|c|}
	\hline
	& \# Polynomials & \# Variables & Degree\\
 \hline
	Modeling~\ref{modeling: BSD_Fq}	& $4n-k+n r_2+r_2$ & $n(r_2 + 2)$ & $q$\\
			\hline
   	Modeling~\ref{modeling: EWSD_Fq}	& $3n-k+n r_2+r_2$ & $n(r_2 + 2)$ & $q$\\
			\hline
   	Modeling~\ref{modeling: quadBSD_Fq}	& $3n-k+n r_2+r_2$ & $n(r_2 + 2)$ & $2$\\
			\hline
	\end{tabular}
  \vspace{2mm}
    \caption{Number of equations, number of variables and maximum degree of the algebraic modelings over $\FF_{q^{r_1}}$.}
    \label{table:Fq-model-sizes}
\end{table}

\begin{remark}
    Since $r_2$ can be chosen as at most $m=\OO(\log_q (n))$, both the number of polynomials and variables are quasi-linear in the code-length $n$ in all the three modelings, namely they are $\OO(\log_2 (n))$. At the cost of defining the system over $\FF_{q^{r_1}}=\FF_{q^{m}}$, these quantities become linear in $n$, as the choice $r_2=1$ is possible.
\end{remark}

The modelings above capture exactly the corresponding syndrome decoding problem variants.
\begin{theorem} \label{prop: sol iff}
Given an instance $(\HH,\vs,t)$,
\begin{enumerate}
\item The vector $(\vx, \vy, \vz)$ is a solution of Modeling~\ref{modeling: BSD_Fq} if and only if  $\vx$ is a solution of Problem~\ref{BSDP} and $\vx\in\FF_q$;
\item The vector $(\vx, \vy, \vz)$ is a solution of Modeling~\ref{modeling: EWSD_Fq} if and only if  $\vx$ is a solution of Problem~\ref{EWSDP} and $\vx\in\FF_q$;
\item The vector $(\vx, \vy, \vz)$ is a solution of Modeling~\ref{modeling: quadBSD_Fq} if and only if  $\vx$ is a solution of Problem~\ref{BSDP} and $\vx\in\overline{\FF_q}$, where $\overline{\FF_q}$ denotes the algebraic closure of $\FF_q$.
\end{enumerate}
\end{theorem}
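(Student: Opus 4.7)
The plan is to treat the three parts in parallel, since the three modelings share the same skeleton --- parity-check block \eqref{eq:pce}, Hamming weight recursion \eqref{eq: hwceFqinit}--\eqref{eq: hwceFq}, and weight constraint \eqref{eq: wceFq} appear in all three --- and differ only in the support encoding and in the presence or absence of the field equations \eqref{eq: ffe_Fq}. The strategy is to decode what each block contributes to the decoding conditions.

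For the forward direction, fix a solution $(\vx,\vy,\vz)$ of the corresponding modeling. Equation \eqref{eq:pce} immediately gives $\HH\vx^\top = \vs^\top$. The support constraint then yields two facts: $\vz \in \{0,1\}^n$, together with either $\supp(\vx) \subseteq \supp(\vz)$ (in the case of \eqref{eq:sceBSD}, using $z_j^2 = z_j$ and $x_j(z_j-1)=0$) or $\supp(\vx) = \supp(\vz)$ (in the case of \eqref{eq:sceEWSD}, using the field equations to deduce $x_j^{q-1} \in \{0,1\}$). Proposition~\ref{prop: correctness} then gives $\vy_n = \Cf^{\wt(\vz)} \vy_0$; comparing with \eqref{eq: wceFq} and invoking the primitivity of $f$ together with the sizing bound $\ord(f) = q^{r_1 r_2} - 1 > \max(t, n-t) \ge |t - \wt(\vz)|$ --- exactly as in the proof of Corollary~\ref{cor: Z} --- forces $\wt(\vz) = t$. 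Combined with the support relation, this gives $\wt(\vx) \leq t$ in cases (1) and (3) and $\wt(\vx) = t$ in case (2); the field equations enforce $\vx \in \FF_q^n$ in (1) and (2), while their absence in (3) only guarantees $\vx \in \overline{\FF_q}^n$.

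For the converse, start with a decoding solution $\vx$ and construct $(\vy,\vz)$. In case (2), set $z_j = x_j^{q-1}$, which is the indicator of $\supp(\vx)$ since $\vx \in \FF_q^n$. In cases (1) and (3), choose any $\vz \in \{0,1\}^n$ extending $\supp(\vx)$ to a set of exactly $t$ positions, which is possible because $\wt(\vx) \leq t \leq n$. In both situations, the vectors $\vy_1, \ldots, \vy_n$ are then uniquely determined by the recursion \eqref{eq: hwceFqinit}--\eqref{eq: hwceFq}, and Proposition~\ref{prop: correctness} ensures $\vy_n = \Cf^t \vy_0$, so \eqref{eq: wceFq} is automatic. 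The remaining support and field equations are then straightforward to verify.

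The main obstacle, and the reason Corollary~\ref{cor: Z} must be invoked rather than just Proposition~\ref{prop: correctness}, is the passage from the matrix identity $\Cf^{\wt(\vz)}\vy_0 = \Cf^t \vy_0$ to the integer equality $\wt(\vz) = t$. This step genuinely requires both the primitivity of $f$ and the strict inequality $\ord(f) > \max(t, n-t)$ guaranteed by the choice $r_1 r_2 \geq m$; once this is in hand, what remains is bookkeeping across the three parallel cases.
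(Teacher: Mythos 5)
Your proof is correct and follows essentially the same route as the paper: the parity-check block gives the syndrome condition, the support constraints relate $\wt(\vx)$ to $\wt(\vz)$, the argument of Corollary~\ref{cor: Z} forces $\wt(\vz)=t$, and the presence or absence of the field equations determines the base field. You are in fact more explicit than the paper about the converse direction --- constructing $\vz$ as an indicator vector of a $t$-set containing $\supp(\vx)$ (or as $x_j^{q-1}$ in the exact-weight case) and letting the recursion determine $\vy$ --- which the paper leaves implicit.
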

\begin{proof}
Since the parity-check equations \eqref{eq:pce} belong to all three modelings, it remains to prove the conditions on the weight of $\vx$ and to determine the base field over the vector can lie.\\
    \textit{Proof of 1.} Modeling~\ref{modeling: BSD_Fq} contains the field equations, therefore $\vx\in \FF_q^n$. It has already been proved that \eqref{eq:sceBSD} implies $\wt(\vx)\le\wt(\vz)$. By Corollary~\ref{cor: Z}, \eqref{eq: hwceFqinit},  \eqref{eq: hwceFq} and \eqref{eq: wceFq} imply that $\wt(\vz)=t$, hence $\wt(\vx)\le t$.\\
    \textit{Proof of 2.} Modeling~\ref{modeling: EWSD_Fq} contains the field equations, therefore $\vx\in \FF_q^n$. It has already been proved that \eqref{eq:sceEWSD} implies $\wt(\vx)=\wt(\vz)$. By Corollary~\ref{cor: Z}, \eqref{eq: hwceFqinit},  \eqref{eq: hwceFq} and \eqref{eq: wceFq} imply that $\wt(\vz)=t$, hence $\wt(\vx)= t$.\\
    \textit{Proof of 3.} The proof is analogous to the proof of \textit{1.}, with the only exception that Modeling~\ref{modeling: quadBSD_Fq} does not contain the field equations. Hence, the solutions of the system are all the vectors defined over the algebraic closure $\overline{\FF_q}$ that satisfy the parity-check equations and have weight at most $t$. \qed
\end{proof}

\subsection{The Dimension of the Variety Associated with Modeling~\ref{modeling: quadBSD_Fq}}\label{Sec:Dim_of_Var}
Unlike the modelings that include the field equations, Modeling~\ref{modeling: quadBSD_Fq} is not a priori associated with a zero-dimensional ideal. This represents the main drawback of Modeling~\ref{modeling: quadBSD_Fq} compared to Modeling~\ref{modeling: BSD_Fq}. The zero-dimensional property is desirable because it is necessary for defining the degree of regularity and for applying the FGLM algorithm to convert the $\mathsf{degrevlex}$ Gröbner basis into a $\mathsf{lex}$ basis. While it is possible to convert non-zero dimensional ideals using methods such as Gr\"obner walk or others, the process may not be as straightforward \cite{GBwalk1, GBwalk2}.

In this subsection, we analyze the dimension of the variety associated with the ideal corresponding to Modeling~\ref{modeling: quadBSD_Fq}. We will explore the conditions under which the variety is zero-dimensional, as well as the probability of this occurring. We begin with the following reduction.

\begin{proposition}\label{prop:reductiontoProblem}
    Let $\vx\in\FF_q^n$ be a vector which is a solution of Problem~\ref{BSDP} for a given instance  $(\HH,\vs,t)$. In other words, $\vx$ satisfies $\mathbf{H}\vx^\top=\mathbf{s}^\top$ and  $\wt(\vx)\leq t$. Then, there exist finitely many $(\vy,\vz)$ such that  $(\vx, \vy, \vz)$ is a solution of Modeling~\ref{modeling: quadBSD_Fq}.
\end{proposition}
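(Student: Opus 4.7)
The strategy is to fix $\vx$ as in the hypothesis and count the pairs $(\vy,\vz)$ that can complete it to a solution of Modeling~\ref{modeling: quadBSD_Fq}. Since the parity-check equations \eqref{eq:pce} depend only on $\vx$ and are satisfied by assumption, only the equations \eqref{eq:sceBSD}, \eqref{eq: hwceFqinit}, \eqref{eq: hwceFq}, and \eqref{eq: wceFq} constrain the remaining variables. The plan is to first show that $\vz$ ranges over a finite set of Boolean vectors, and then that $\vy$ is uniquely determined by $\vz$ via a triangular recurrence.

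For the first step, I would inspect the support constraint encoding \eqref{eq:sceBSD}. Over any extension of $\FFq$, the equations $z_j^2-z_j=0$ force $z_j\in\{0,1\}$, and the equations $x_j(z_j-1)=0$ then force $z_j=1$ whenever $x_j\neq 0$, while leaving $z_j\in\{0,1\}$ unconstrained when $x_j=0$. Hence $\vz$ lies in a subset of $\{0,1\}^n$ of cardinality at most $2^{n-\wt(\vx)}$, which is finite. For the second step, the Hamming weight computation encoding \eqref{eq: hwceFqinit}--\eqref{eq: hwceFq} is a triangular system: equation \eqref{eq: hwceFqinit} determines $Y_1$ as an explicit function of $z_1$ and the fixed vector $\vy_0$, and for $j\geq 2$ equation \eqref{eq: hwceFq} determines $Y_j$ as a linear function of $Y_{j-1}$ and $z_j$. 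Consequently $\vy=(Y_1,\ldots,Y_n)$ is uniquely determined once $\vz$ is fixed.

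Combining these two observations already yields at most $2^{n-\wt(\vx)}$ candidate pairs $(\vy,\vz)$, which is a finite quantity. The weight constraint \eqref{eq: wceFq} further prunes this set: by Proposition~\ref{prop: correctness} together with the primitivity argument used in the proof of Corollary~\ref{cor: Z}, it selects exactly those $\vz$ with $\wt(\vz)=t$, giving the sharper count $\binom{n-\wt(\vx)}{t-\wt(\vx)}$. Finiteness follows either way. The proof is essentially bookkeeping; the only point worth emphasising is that the recurrence in \eqref{eq: hwceFq} leaves no free parameter in $\vy$ once $\vz$ is fixed, so the absence of field equations on $\vx$ in Modeling~\ref{modeling: quadBSD_Fq} does not threaten finiteness of the fibre over $\vx$.
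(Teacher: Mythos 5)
Your proposal is correct and follows essentially the same route as the paper: fix $\vx$, observe that \eqref{eq:sceBSD} forces $\vz\in\{0,1\}^n$ with $z_j=1$ on the support of $\vx$, note that the recurrence \eqref{eq: hwceFqinit}--\eqref{eq: hwceFq} determines $\vy$ uniquely from $\vz$, and conclude with the count $\binom{n-\wt(\vx)}{t-\wt(\vx)}$, which matches the paper's $\binom{n-\bar t}{t-\bar t}$. If anything, your write-up makes explicit a step the paper leaves implicit (that $z_j^2-z_j=0$ confines $\vz$ to Boolean values even over the algebraic closure, which is what makes the fibre finite without field equations on $\vx$).
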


\begin{proof}
Let us first consider a vector $\vx$ satisfying the parity-check equations and such that $\wt(\vx)=t$. Then, in Modeling~\ref{modeling: quadBSD_Fq}, the $z_i$ must detect exactly the support of $\vx$, and $\vx$ uniquely determines a solution $(\vx,\vy,\vz)$ of the system. If instead $\wt(\vx)=\bar{t}<t$, then there exist $t-\bar{t}$ indexes where the $Z$ variables can have value 1 while $\vx_i=0$. Thus, for any solution $\bar{\vx}\in\FF_q^n$ of the parity check matrix of weight $\bar{t}$, there exist
\[
\binom{n-\bar{t}}{t-\bar{t}}
\]
different solutions $(\bar{\vx}, \vy, \vz)$. 

A special case is given by the codeword finding problem, i.e. where the syndrome $\vs$ is the zero vector. Here the null vector is a solution of Problem~\ref{BSDP} and leads to $\binom{n}{t}$ solutions, thus likely increasing a lot the solving degree and the cost of a Gr\"obner basis computation. We can get rid of all these solutions by fixing one variable $z_i=1$, thus forcing any solution to have weight at least 1 and removing the null vector. If the target solution has weight $t$, then the guess has success with probability $t/n$. We will discuss this strategy in more detail at the end of this subsection. 

\qed
\end{proof}

Proposition~\ref{prop:reductiontoProblem} implies that each solution to the decoding problem (Problem~\ref{BSDP}) corresponds to a finite number of solutions for Modeling~\ref{modeling: quadBSD_Fq}. This allows us to conduct an analysis that is independent of the specific modeling, as long as it accurately encodes the decoding problem in the sense of Theorem~\ref{prop: sol iff}. Therefore, we will focus on the Krull dimension of the solution set of Problem~\ref{BSDP} and provide a probability estimate for this dimension being zero.

First, in the following remark we briefly collect the definition of Krull dimension and some important properties we will use in the sequel. Expanded details and proofs can be found e.g. in \cite[\S4, Chapter~9]{cox1997ideals} or other standard references in commutative algebra and algebraic geometry.

\begin{remark}[Krull dimension]\label{prop: Krull=dim}\label{prop: max_dim}
Let $\K$ be an algebraically closed field (we will apply the following definitions and results to $\K=\overline{\FF_q}$). An affine variety $V$ is the zero locus in $\K^m$ of a proper ideal $I$ of the polynomial ring $\K[x_1,\dots,x_m]$. We say that $V$ is irreducible if it is not possible to write $V=V_1\cup V_2$ where $V_1,V_2\subsetneq V$ are two proper subvarieties. Irreducibility of $V$ is equivalent to the corresponding ideal $I$ being prime.
The \emph{Krull dimension} or simply the dimension of a variety $V$ is defined as the maximal length $d$ of the chains $V_0\subsetneq V_1\subsetneq \cdots \subsetneq V_d$,
of distinct nonempty irreducible subvarieties of $V$. This is also equivalent to the supremum of the lengths of all chains of prime ideals containing the defining ideal $I$ of $V$.
For example, the Krull dimension of an affine linear space $\mathcal{L}$ generated by $a$ linearly independent affine linear polynomials $L_1,\dots,L_a$ is precisely $m-a$, that is its dimension as an affine space. This can be seen by completing the polynomials to a maximal linearly independent system of $m$ equations (in $m$ variables) $L_1,\dots,L_a,L_{a+1},\dots,L_m$ and then considering the following maximal chain of prime ideals
\[
\mathcal{L}=\langle L_1,\dots L_a\rangle\subsetneq\langle L_1,\dots L_a,L_{a+1}\rangle\subsetneq\cdots\subsetneq \langle L_1,\dots L_m\rangle.
\]
Notice that each ideal in this chain is prime, being generated by linearly independent polynomials of degree $1$.
Finally, we mention that, thanks to the Noetherian property of the polynomial ring, a variety $V$ can be written uniquely as the union of irreducible varieties, which are called the irreducible components of $V$. Thus, the dimension of $V$ coincides with the largest of the dimensions of its irreducible components  (see \cite[\S 4,Chapter 9, Corollary 9]{cox1997ideals}).
\end{remark}

Let $S\subseteq [n]$. Given a matrix $\mathbf{H}$ with $n$ columns and a vector $\vx \in \FF_q^n$, we denote by $\mathbf{H}_S$ the submatrix of $\mathbf{H}$ of columns indexed by $S$ and by $\vx_S \in \FF_q^{|S|}$ the vector obtained by deleting the coordinates corresponding to $[n]\setminus S$ from $\vx\in\FF_q^n$. On the contrary, let $\mathsf{pad}_S(\vx)\in\FF_q^n$ be the vector obtained from $\vx\in \FF_q^{|S|}$ by padding with 0's the positions corresponding to $[n]\setminus S$.

\begin{proposition}
    Let $\mathcal{C}$ be an $[n,k]$ code with parity-check matrix $\mathbf{H}\in \FF_q^{(n-k)\times n}$. Then the set of solutions of Problem~\ref{BSDP} with target weight $t$ and syndrome $\vs$ for the code $\mathcal{C}$ is the finite union of irreducible components, namely
    \[
    \bigcup_{S \subset [n], |S|=t} \{ \mathsf{pad}_S(\vx) \in \FF_q^n \mid \mathbf{H}_S \vx= \vs\}.
    \]
\end{proposition}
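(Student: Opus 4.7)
The plan is to verify the claimed set equality by direct inspection, and then to observe that each set appearing on the right is an affine linear subvariety of $\overline{\FF_q}^n$ and hence irreducible by the discussion in Remark~\ref{prop: Krull=dim}. In what follows, I work over $\overline{\FF_q}$ so that the notions of affine variety and irreducible component make sense; the condition $\mathsf{pad}_S(\vx) \in \FF_q^n$ in the statement is read as specifying the ambient affine space.

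First, for each $S \subseteq [n]$ with $|S|=t$, set $V_S := \{\mathsf{pad}_S(\vx) \in \overline{\FF_q}^n \mid \mathbf{H}_S\vx^\top = \vs^\top\}$. The conditions defining $V_S$ are precisely the $n - t$ linear equations $x_j = 0$ for $j \in [n]\setminus S$ together with the $n-k$ affine linear equations from $\mathbf{H}_S\vx^\top - \vs^\top = 0$. Hence $V_S$ is the zero locus of a system of affine linear polynomials, which makes it an affine linear subspace of $\overline{\FF_q}^n$ (possibly empty if the parity-check system has no solution on the support $S$). By the example in Remark~\ref{prop: Krull=dim}, every nonempty affine linear subspace is irreducible, so each nonempty $V_S$ is an irreducible subvariety.

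Second, I would establish the set equality between the union and the solution set of Problem~\ref{BSDP}. For the inclusion $\bigcup_{|S|=t} V_S \subseteq \{\text{solutions}\}$: any $\ve \in V_S$ has $\supp(\ve) \subseteq S$, so $\wt(\ve)\le|S|=t$, and by construction $\mathbf{H}\ve^\top = \mathbf{H}_S(\ve_S)^\top = \vs^\top$. For the reverse inclusion, given a solution $\ve$ with $T := \supp(\ve)$ of size $\wt(\ve)\le t$, I would extend $T$ to any subset $S \subseteq [n]$ of size exactly $t$ (such an extension exists because $t \le n$). Then $\ve = \mathsf{pad}_S(\ve_S)$ tautologically, and $\mathbf{H}_S(\ve_S)^\top = \mathbf{H}\ve^\top = \vs^\top$, giving $\ve \in V_S$.

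Third, finiteness of the union follows from the bound $\binom{n}{t}$ on the number of index sets. Combined with the first step, this exhibits the solution set as a finite union of irreducible affine linear subvarieties, which is what is asserted. I do not expect any substantive obstacle: the only delicate point is that the $V_S$ enumerated here are not necessarily pairwise incomparable, so some pieces may be empty or strictly contained in another $V_{S'}$; after discarding such redundancies, what remains is the genuine irreducible decomposition in the sense of Remark~\ref{prop: Krull=dim}. In particular, every irreducible component of the solution variety appears among the $V_S$.
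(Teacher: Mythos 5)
Your proposal is correct and follows essentially the same route as the paper: both directions of the set inclusion are verified exactly as in the paper's proof (support containment gives weight at most $t$ and the parity-check identity $\mathbf{H}\cdot\mathsf{pad}_S(\vx)^\top=\mathbf{H}_S\vx^\top$; conversely any solution's support is extended to a size-$t$ set), and irreducibility is deduced from each piece being an affine linear space. Your added remarks on working over $\overline{\FF_q}$ and on discarding empty or redundant pieces to obtain the genuine irreducible decomposition are careful refinements of the same argument, not a different one.
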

\begin{proof}
    For any $S$ of cardinality $t$, $\wt(\mathsf{pad}_S(\vx))=\wt(\vx)\le t$ and $\mathbf{H}\cdot \mathsf{pad}_S({\vx}) =\mathbf{H}_S \vx=\vs$. Thus, the set of solutions of the decoding problem contains $$  \bigcup_{S \subset [n], |S|=t} \{ \mathsf{pad}_S(\vx) \in \FF_q^n \mid \mathbf{H}_S \vx= \vs\}.$$ On the other hand, for any solution $\vx \in \FF_q^n$ to the decoding problem, let $S$ be a set of cardinality $t$ containing the support of $\vx$. Then, $\vx\in \{ \mathsf{pad}_S(\vx) \in \FF_q^n \mid \mathbf{H}_S \vx= \vs\}$. Finally, all the sets $ \{ \mathsf{pad}_S(\vx) \in \FF_q^n \mid \mathbf{H}_S \vx= \vs\}$ are irreducible being affine linear spaces. \qed
\end{proof} 

The next proposition characterizes the dimension of the irreducible components of the set of solutions of the decoding problem and the finite field extension over which solutions are defined.
\begin{proposition} \label{prop: variety}
    The Krull dimension of the solution set of Problem~\ref{BSDP} with target weight $t$ and syndrome $\vs$ for the linear code with parity-check matrix $\mathbf{H}$ is 
    \begin{equation} \label{eq: dimension_Ideal}
    t-\min\{\rk(\mathbf{H}_S) \mid S\subseteq [n], |S|=t, \rk((\mathbf{H}_S\mid \vs))=\rk(\mathbf{H}_S)\}. 
    \end{equation}
    Moreover, if the dimension is 0, then all solutions lie over $\FF_q$.
    \end{proposition}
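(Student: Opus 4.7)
The plan is to combine the preceding proposition (which decomposes the solution set as a finite union of affine spaces indexed by $t$-subsets $S\subseteq [n]$) with the general fact recalled in Remark~\ref{prop: Krull=dim} that the Krull dimension of a variety equals the maximum of the dimensions of its irreducible components.

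First, I would fix $S\subseteq [n]$ with $|S|=t$ and analyze the component $V_S = \{\mathsf{pad}_S(\vx)\in\overline{\FF_q}^n \mid \HH_S\vx=\vs\}$. The map $\vx\mapsto \mathsf{pad}_S(\vx)$ is an isomorphism of affine spaces between $\{\vx\in\overline{\FF_q}^t \mid \HH_S\vx=\vs\}$ and $V_S$, so I only need the dimension of the former. By the Rouch\'e--Capelli theorem, this affine linear system is consistent if and only if $\rk((\HH_S\mid \vs))=\rk(\HH_S)$, and in that case the solution set is an affine linear subspace of $\overline{\FF_q}^t$ of codimension $\rk(\HH_S)$, hence has Krull dimension $t-\rk(\HH_S)$ (as recalled in Remark~\ref{prop: Krull=dim}). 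If the system is inconsistent, then $V_S=\emptyset$ and contributes nothing to the union.

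Next, since the solution set of Problem~\ref{BSDP} is the finite union $\bigcup_S V_S$ over all $t$-subsets, and each nonempty $V_S$ is irreducible (affine linear), these are exactly the irreducible components (up to inclusions, which only remove components and do not affect the maximal dimension). The dimension of the union is therefore the maximum of $t-\rk(\HH_S)$ over those $S$ for which $V_S$ is nonempty, which rearranges to the claimed formula
$$t-\min\{\rk(\HH_S)\mid S\subseteq[n],\ |S|=t,\ \rk((\HH_S\mid\vs))=\rk(\HH_S)\}.$$

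For the final assertion, suppose the dimension equals $0$. Then every nonempty component $V_S$ has dimension $0$, which forces $\rk(\HH_S)=t$ for every $S$ achieving the minimum; hence the linear system $\HH_S\vx=\vs$ has a unique solution, and since both $\HH_S$ and $\vs$ have entries in $\FF_q$, Cramer's rule (or Gaussian elimination over $\FF_q$) yields $\vx\in \FF_q^t$, so $\mathsf{pad}_S(\vx)\in\FF_q^n$. The main subtlety is just the bookkeeping of which components are empty versus nonempty and noticing that components of smaller dimension contained in larger ones are automatically absorbed; this is straightforward. I do not anticipate a serious obstacle, since once the decomposition of the previous proposition is in hand, everything reduces to classical linear algebra over $\FF_q$.
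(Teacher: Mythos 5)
Your proposal is correct and follows essentially the same route as the paper's proof: decompose the solution set into the affine linear components $V_S$, read off each component's dimension as $t-\rk(\HH_S)$ when the system $\HH_S\vx=\vs$ is consistent, take the maximum over consistent $S$, and observe that in the zero-dimensional case each nonempty component is a single $\FF_q$-rational point. The only difference is cosmetic (you invoke Rouch\'e--Capelli and Cramer's rule by name where the paper argues the same facts directly).
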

    \begin{proof}
Let us fix the support $S$ of $t$ possible error positions. By Remark~\ref{prop: Krull=dim}, the Krull dimension of the irreducible components coincides with their dimensions as affine linear spaces. The case study of the set of solutions of
\[
\mathbf{H}_S \vx = \vs,
\]
seen as a variety, thus becomes the following:

\begin{itemize}
    \item if $\rk((\mathbf{H}_S\mid \vs))>\rk(\mathbf{H}_S) \Rightarrow$ the variety is empty;
    \item if $\rk((\mathbf{H}_S\mid \vs))=\rk(\mathbf{H}_S) \Rightarrow$ the variety has dimension $t-\rk(\mathbf{H}_S)$. In particular, if the variety has dimension 0, i.e. $\rk(\mathbf{H}_S)=t$, then it has a unique element, which belongs to $\FF_q$.
\end{itemize}

By Remark~\ref{prop: max_dim}, the dimension of the solutions set is obtained as the maximum dimension over all the irreducible components corresponding to some $S$ for which $\rk((\mathbf{H}_S\mid \vs))=\rk(\mathbf{H}_S)$:
\begin{align*}
&\max \{ t - \rk(\mathbf{H}_S) \mid S\subseteq [n], |S|=t, \rk((\mathbf{H}_S\mid \vs))=\rk(\mathbf{H}_S)\}\\=& t-\min\{\rk(\mathbf{H}_S) \mid S\subseteq [n], |S|=t, \rk((\mathbf{H}_S\mid \vs))=\rk(\mathbf{H}_S)\}. 
\end{align*}
Let us now consider the case of a zero-dimensional variety. Then for any choice of $S$, there is at most a solution and it must belong to $\FF_q^n$. Hence, all solutions belong to $\FF_q^n$. \qed
\end{proof}

\begin{corollary} \label{cor: dim_variety}
    The dimension of the variety associated with Modeling~\ref{modeling: quadBSD_Fq} is
        \begin{equation}
    t-\min\{\rk(\mathbf{H}_S) \mid S\subseteq [n], |S|=t, \rk((\mathbf{H}_S\mid \vs))=\rk(\mathbf{H}_S)\}. 
    \end{equation}
\end{corollary}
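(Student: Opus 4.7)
The plan is to reduce the computation to an application of Proposition~\ref{prop: variety} by comparing the variety $V \subseteq \overline{\FF_q}^{n(r_2+2)}$ cut out by Modeling~\ref{modeling: quadBSD_Fq} with the solution set of Problem~\ref{BSDP} viewed over the algebraic closure. First I would recall from Theorem~\ref{prop: sol iff}(3) that a point $(\vx, \vy, \vz)$ lies in $V$ if and only if $\vx \in \overline{\FF_q}^n$ satisfies $\mathbf{H}\vx^\top = \vs^\top$ together with the weight bound imposed by the rest of the system. I would then analyze what the non-parity-check equations actually enforce: the support-constraint polynomials \eqref{eq:sceBSD} force $z_j \in \{0,1\}$ and $z_j = 1$ whenever $x_j \neq 0$, while Corollary~\ref{cor: Z} combined with the Hamming-weight computation encoding \eqref{eq: hwceFqinit}, \eqref{eq: hwceFq} and the weight constraint \eqref{eq: wceFq} forces $\wt(\vz) = t$. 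In particular, $\supp(\vx) \subseteq \supp(\vz)$ and $|\supp(\vz)|=t$.

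Next, I would decompose $V$ according to the support of $\vz$. For each $S \subseteq [n]$ with $|S| = t$, let $\vz_S$ be the characteristic vector of $S$ and let $\vy_S$ be the tuple uniquely determined from $\vz_S$ by the recursion in \eqref{eq: hwceFqinit}--\eqref{eq: hwceFq}. Then
\[
V \;=\; \bigcup_{\substack{S\subseteq [n] \\ |S|=t}} V_S, \qquad V_S \;=\; \bigl\{(\vx, \vy_S, \vz_S) \;\bigm|\; \vx \in \overline{\FF_q}^n,\ \mathbf{H}\vx^\top = \vs^\top,\ \supp(\vx) \subseteq S\bigr\}.
\]
The projection $\pi_X : V_S \to \overline{\FF_q}^n$ onto the $\vx$-coordinates is an isomorphism of affine varieties onto $\{\mathsf{pad}_S(\vx') \mid \vx' \in \overline{\FF_q}^{t},\ \mathbf{H}_S \vx' = \vs\}$, since $(\vy_S, \vz_S)$ is fixed once $S$ is fixed. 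Hence $V_S$ is either empty or an affine linear subspace of dimension $t - \rk(\mathbf{H}_S)$, exactly as in the case analysis of the proof of Proposition~\ref{prop: variety}.

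Finally, by the additive property of Krull dimension with respect to finite unions of irreducible (here: linear) components (see Remark~\ref{prop: max_dim}),
\[
\dim V \;=\; \max_{\substack{S\subseteq [n],\, |S|=t \\ \rk((\mathbf{H}_S\mid \vs)) = \rk(\mathbf{H}_S)}} \bigl(t - \rk(\mathbf{H}_S)\bigr) \;=\; t - \min\bigl\{\rk(\mathbf{H}_S) \;\bigm|\; |S|=t,\ \rk((\mathbf{H}_S\mid \vs)) = \rk(\mathbf{H}_S)\bigr\},
\]
which is precisely the claimed value. The main technical point, and the one I would take some care to justify, is that Theorem~\ref{prop: sol iff}(3) combined with Corollary~\ref{cor: Z} indeed transfers verbatim to $\overline{\FF_q}$-points (rather than just $\FF_q$-points), so that the bijection between supports $S$ of size $t$ containing $\supp(\vx)$ and components $V_S$ is really a decomposition of the full variety and not only of its $\FF_q$-rational points. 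Everything else is a direct consequence of Proposition~\ref{prop:reductiontoProblem} and the dimension computation already carried out in Proposition~\ref{prop: variety}.
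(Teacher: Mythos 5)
Your proof is correct and takes essentially the same route as the paper: the paper's one-line argument simply invokes Proposition~\ref{prop:reductiontoProblem} (finitely many $(\vy,\vz)$ over each $\vx$) together with Propositions~\ref{prop: max_dim} and~\ref{prop: variety}, which is exactly the reduction you carry out, only you make the support decomposition into components $V_S$ and the constancy of $(\vy_S,\vz_S)$ on each of them explicit. Your added care about the argument working over $\overline{\FF_q}$-points (via Theorem~\ref{prop: sol iff}(3) and the fact that $z_j^2=z_j$ still forces $\vz\in\{0,1\}^n$ over the closure) is a worthwhile clarification but not a departure from the paper's approach.
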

\begin{proof}
    It readily follows from Propositions \ref{prop: max_dim} and \ref{prop: variety} and the fact that each solution of Problem~\ref{BSDP} corresponds to a finite number of solutions of Modeling~\ref{modeling: quadBSD_Fq}, thus it does not increase the dimension of the variety. \qed
\end{proof}

For relevant and not too-small parameters, we usually have $t\ll n-k$. Assuming the weight distribution of a linear code follows closely the Bernoulli one, we can estimate the probability that the ideal is zero-dimensional, and thus, by exploiting the proof of Proposition \ref{prop: variety}, that all the solutions $\vx$ lie over $\FF_q$.

\begin{proposition} \label{prop: bound}
Let $\mathcal{C}$ be an $\mathbb{F}_q$-linear code and let $W_i(\mathcal{C})$ the number of codewords of weight exactly $i$ in $\mathcal{C}$. Then the probability that Modeling~\ref{modeling: quadBSD_Fq} provides a variety of strictly positive dimension when $t<n-k$ is upper bounded by
\[
\sum_{i=1}^t  W_i(\mathcal{C})\left(\frac{1}{q^{n-k-i+1}}+\binom{n-i}{t-i}\left(\frac{1}{q^{n-k-t+1}}-\frac{1}{q^{n-k-i+1}}\right)\right)
\]
for a randomly sampled syndrome. For the codeword finding problem, the same probability is upper bounded by
\[
\sum_{i=1}^t  W_i(\mathcal{C})\binom{n-i}{t-i}.
\]
\end{proposition}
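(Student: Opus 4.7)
The plan is to translate the statement, via Corollary~\ref{cor: dim_variety}, into a combinatorial condition on supports and then apply a union bound. By that corollary, the variety of Modeling~\ref{modeling: quadBSD_Fq} has strictly positive dimension precisely when there exists $S\subseteq[n]$ with $|S|=t$ such that $\rk(\HH_S)<t$ and $\vs\in\mathrm{colspan}(\HH_S)$. The first condition is equivalent to the existence of a nonzero codeword $\vc\in\CC$ with $\supp(\vc)\subseteq S$, because the kernel of $\HH_S$ corresponds to the shortening of $\CC$ to $S$. I would therefore upper bound the probability of interest by a union bound over pairs $(\vc,S)$ with $\vc\in\CC\setminus\{\mathbf{0}\}$ of weight $i\le t$ and $S\supseteq\supp(\vc)$ of size $t$; for a fixed $\vc$ of weight $i$ there are exactly $\binom{n-i}{t-i}$ such $S$.

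For a uniformly random syndrome, I would estimate the contribution $E_\vc$ of a fixed codeword $\vc$ of weight $i$ by exploiting the chain $\mathrm{colspan}(\HH_{\supp(\vc)})\subseteq\mathrm{colspan}(\HH_S)$ valid for every $S\supseteq\supp(\vc)$. Decomposing
\[
E_\vc=\{\vs\in\mathrm{colspan}(\HH_{\supp(\vc)})\}\,\cup\bigcup_{\substack{S\supsetneq\supp(\vc)\\ |S|=t}}\{\vs\in\mathrm{colspan}(\HH_S)\setminus\mathrm{colspan}(\HH_{\supp(\vc)})\}
\]
and applying a union bound produces, in the generic regime where $\rk(\HH_{\supp(\vc)})=i-1$ (i.e. $\vc$ is, up to scalars, the only codeword supported in $\supp(\vc)$) and $\rk(\HH_S)=t-1$ for the strict supersets, an inner contribution $q^{i-1}/q^{n-k}=1/q^{n-k-i+1}$ and $\binom{n-i}{t-i}$ extra contributions of $(q^{t-1}-q^{i-1})/q^{n-k}=1/q^{n-k-t+1}-1/q^{n-k-i+1}$ each. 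Multiplying by the number $W_i(\CC)$ of weight-$i$ codewords and summing over $1\le i\le t$ yields the claimed bound. The boundary case $i=t$ is absorbed automatically since the bracketed factor $1/q^{n-k-t+1}-1/q^{n-k-i+1}$ vanishes, while the hypothesis $t<n-k$ keeps all the exponents of $q$ strictly positive so that the estimate is nontrivial.

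For the codeword finding variant the syndrome is $\mathbf{0}$, which lies in $\mathrm{colspan}(\HH_S)$ for every $S$; consistency holds deterministically, each pair $(\vc,S)$ is ``bad'' with probability one, and the union bound collapses to simply counting such pairs, giving $\sum_{i=1}^t W_i(\CC)\binom{n-i}{t-i}$. The principal obstacle---and the only non-combinatorial ingredient in the argument---is the generic-position assumption on the ranks of $\HH_{\supp(\vc)}$ and $\HH_S$: this matches the Bernoulli weight-distribution heuristic invoked just before the statement and is satisfied by typical random codes in the cryptographic parameter regime, but for structured codes whose low-weight codewords share common supports the estimate becomes a heuristic and a finer case analysis would be required to obtain a strict worst-case upper bound.
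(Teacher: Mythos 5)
Your overall strategy is the same as the paper's: reduce to the combinatorial condition from Corollary~\ref{cor: dim_variety}, run a union bound over codewords $\vc$ of weight $i\le t$ and the $\binom{n-i}{t-i}$ supersets $S\supseteq\supp(\vc)$ of size $t$, and split each event into $\{\vs\in\mathsf{ColSpace}(\HH_{\supp(\vc)})\}$ and $\{\vs\in\mathsf{ColSpace}(\HH_S)\setminus\mathsf{ColSpace}(\HH_{\supp(\vc)})\}$. The codeword-finding case is also handled identically.

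There is, however, one genuine gap: you derive the numerical bound only ``in the generic regime'' where $\rk(\HH_{\supp(\vc)})=i-1$ and $\rk(\HH_S)=t-1$, and you then concede in your last paragraph that for structured codes the estimate is merely heuristic. That concession is unnecessary, and as written your argument does not prove the proposition, which is an unconditional upper bound for \emph{any} code. The fix is to observe that you never need the rank equalities, only the inequalities $\rk(\HH_{\supp(\vc)})\le i-1$ (the columns indexed by $\supp(\vc)$ satisfy the dependency given by $\vc$ itself) and $\rk(\HH_S)\le \rk(\HH_{\supp(\vc)})+(t-i)$ (adding $t-i$ columns raises the rank by at most $t-i$). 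Since for a uniform syndrome $\mathbb{P}(\vs\in\mathsf{ColSpace}(\HH_T))=q^{\rk(\HH_T)-(n-k)}$ is increasing in the rank, these inequalities give
\[
\frac{q^{\rk(\HH_{\supp(\vc)})}}{q^{n-k}}\le\frac{1}{q^{n-k-i+1}},\qquad
\frac{q^{\rk(\HH_S)}-q^{\rk(\HH_{\supp(\vc)})}}{q^{n-k}}\le\frac{q^{\rk(\HH_{\supp(\vc)})}\left(q^{t-i}-1\right)}{q^{n-k}}\le\frac{1}{q^{n-k-t+1}}-\frac{1}{q^{n-k-i+1}},
\]
which is exactly the claimed bound with no genericity assumption. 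When low-weight codewords share supports the ranks only drop further, the probabilities only decrease, and the union bound only becomes more generous; the Bernoulli heuristic enters solely in the subsequent remark where $W_i(\mathcal{C})$ is estimated for random codes, not in the proposition itself.
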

\begin{proof}

It follows from Corollary~\ref{cor: dim_variety} that the variety has positive dimension if and only if there exists a set $S\subseteq [n]$, $|S|=t$, such that $\mathbf{H}_S$ is not full-rank and the syndrome $\vs$ belongs to the column space of $\mathbf{H}_S$. 

The parity-check matrix $\mathbf{H}$ has $i$ linearly dependent columns indexed by the set $S$ if and only if the corresponding code $\mathcal{C}$ has a codeword of weight $\le i$ with support contained in $S$. Hence any codeword of positive weight $i\le t$ with support $S'$ identifies a set of $\binom{n-i}{t-i}$ supersets $S\supseteq S'$. On the other hand, each set $S$ of $t$ dependent columns is associated with \textit{at least} one codeword of weight $\le t$, hence iterating over such codewords is enough to guarantee an upper bound.

Let $W_i(\mathcal{C})$ be the number of codewords in $\mathcal{C}$ of weight exactly $i$. By splitting the event $\{\vs \in \mathsf{ColSpace}(\mathbf{H}_{S})\}$ into the union of the two disjoint events $\{\vs \in \mathsf{ColSpace}(\mathbf{H}_{\supp(\vc)})\}$ and  $\{\vs \in \mathsf{ColSpace}(\mathbf{H}_{S})\setminus \mathsf{ColSpace}(\mathbf{H}_{\supp(\vc)})\}$ and using that $\rk(\mathbf{H}_{\supp(\vc)})\le i-1$ and $\rk(\mathbf{H}_{S})\le \rk(\mathbf{H}_{\supp(\vc)})+(t-i)$, we thus obtain an upper bound on the sought probability:

\begin{align*}
&\mathbb{P}\left(\bigcup_{\substack{S\subseteq [n] \\ |S|=t \land \rk(\mathbf{H}_S)<t}} \{\vs\in \mathsf{ColSpace}(\mathbf{H}_S)\}\right)\\
\le &\sum_{\substack{S\subseteq [n] \\ |S|=t \land \rk(\mathbf{H}_S)<t}}\mathbb{P}(\{\vs\in \mathsf{ColSpace}(\mathbf{H}_S)\})\\
\le & \sum_{\substack{\vc \in \mathcal{C} \\ \wt(\vc)\le t} } \left( \mathbb{P}(\{\vs\in \mathsf{ColSpace}(\mathbf{H}_{\supp(\vc)})\}) + \sum_{\substack{\supp(\vc)\subseteq S\subseteq [n] \\ |S|=t }} \mathbb{P}(\{\vs\in \mathsf{ColSpace}(\mathbf{H}_S)\setminus \mathsf{ColSpace}(\mathbf{H}_{\supp(\vc)})\})\right)\\
= & \sum_{i=1}^t W_i(\mathcal{C}) \left( \frac{q^{\dim_{\FF_q} \mathsf{ColSpace}(\mathbf{H}_{\supp(\vc)})}}{q^{n-k}}+ \binom{n-i}{t-i}\frac{q^{\dim_{\FF_q} \mathsf{ColSpace}(\mathbf{H}_{S})}-q^{\dim_{\FF_q} \mathsf{ColSpace}(\mathbf{H}_{\supp(\vc)})}}{q^{n-k}} \right)\\
= & \sum_{i=1}^t W_i(\mathcal{C}) \left( \frac{q^{\rk(\mathbf{H}_{\supp(\vc)})}}{q^{n-k}}+ \binom{n-i}{t-i}\frac{q^{\rk(\mathbf{H}_{S})}-q^{\rk(\mathbf{H}_{\supp(\vc)})}}{q^{n-k}} \right)\\
\le & \sum_{i=1}^t W_i(\mathcal{C}) \left( \frac{q^{\rk(\mathbf{H}_{\supp(\vc)})}}{q^{n-k}}+ \binom{n-i}{t-i}\frac{q^{\rk(\mathbf{H}_{\supp(\vc)})}}{q^{n-k}}(q^{t-i}-1) \right)\\
\le & \sum_{i=1}^t W_i(\mathcal{C}) \left( \frac{1}{q^{n-k-i+1}}+ \binom{n-i}{t-i} \left(\frac{1}{q^{n-k-t+1}}-\frac{1}{q^{n-k-i+1}}\right) \right).
\end{align*}
Finally, in the case of the codeword finding problem, i.e. if the syndrome is the zero vector, the condition on the positive dimension of the variety boils down to the existence of the set $S$, $|S|=t$, such that the rank $\mathbf{H}_S$ is defective, as the zero vector belongs to any linear subspace. Therefore, in this case, the calculations are simplified into:
\[
\mathbb{P}\left(\bigcup_{\substack{S\subseteq [n] \\ |S|=t \land \rk(\mathbf{H}_S)<t}} \{0\in \mathsf{ColSpace}(\mathbf{H}_S)\}\right)
\le\sum_{\substack{S\subseteq [n] \\ |S|=t \land \rk(\mathbf{H}_S)<t}} 1 \le \sum_{i=1}^t W_i(\mathcal{C}) \binom{n-i}{t-i} .
\]
\qed\end{proof}

\begin{remark}
    For random codes, the weight distribution follows closely the Bernoulli one, i.e. $W_i(\mathcal{C})\approx \frac{\binom{n}{i}(q-1)^i}{q^{n-k}}$. Under this assumption, the probability of having a zero-dimensional ideal for the decoding modeling with a random syndrome can be estimated as
\[
\frac{1}{q^{n-k}}\sum_{i=1}^t  \binom{n}{i}(q-1)^i\left(\frac{1}{q^{n-k-i+1}}+\binom{n-i}{t-i}\left(\frac{1}{q^{n-k-t+1}}-\frac{1}{q^{n-k-i+1}}\right)\right),
\]
while for the codeword finding problem as
\[
\frac{1}{q^{n-k}}\sum_{i=1}^t  \binom{n}{i}\binom{n-i}{t-i}(q-1)^i.
\]
We remark in particular that the bound is independent from the choice of $(r_1,r_2)$. Different admissible pairs provide of course different solutions, but the projections of the varieties with respect to the $x_i$'s variables are the same over the field closure, which is what determines the dimension of the associated ideals.
\end{remark}

In Appendix~\ref{app:Dim_Exp}, Tables \ref{table: bound_SD} and \ref{table: bound_CF}, we provide examples of such bounds for concrete parameters. In the case of syndrome decoding, these probabilities are very small even at Gilbert-Varshamov distance and the issue of having ideals of positive dimension is thus absolutely negligible for the purpose of cryptanalysis. On the opposite, the bound on the probability of the same event for the codeword finding problem becomes completely useless when approaching the Gilbert-Varshamov distance. Indeed, the trivial upper bound ``$\pr \le 1$'' entries from the two tables mean that the bound given by Proposition \ref{prop: bound} gives a number larger than 1.
A possible workaround for the described issue with the codeword finding version is to make use of hybrid methods. Indeed, it is enough to guess a number of nonzero positions equal or greater than the ideal dimension to decrease the latter to 0 with high probability. Recalling that the solution space is projective and one the value of one nonzero entry can be chosen arbitrarily, specializing $l$ coordinates has a success probability of $\frac{\binom{n-l}{t-l}}{\binom{n}{t} (q-1)^{l-1}}$. 
In cryptanalysis, however, it is usually assumed to know the minimal weight of a (nonzero) solution. This is because, if there exists a solution of weight smaller than the target, then the challenge is actually easier. 

A simple strategy to obtain a zero-dimensional ideal in this setting is thus the following.
If we suppose to know a lower bound $d'$ the minimum distance $d(C)$ of the code, then it means that any $d'-1$ columns of the parity-check matrix $\mathbf{H}$ are linearly independent. Therefore, Equation~\eqref{eq: dimension_Ideal} implies that the dimension of the ideal for the bounded weight modeling with target weight $d'$ is exactly 1 and thus it is enough to specialize one variable $x_i$ to any element in $\FF_q$ to obtain a zero-dimensional ideal.


\subsection{Experimental results.}
We solved the quadratic system associated with Modeling~\ref{modeling: quadBSD_Fq} for several random codes. We show in Table~\ref{Tab:qNEQ2-SolveDeg} that, similarly to the case over $\FFt$, the solving degree is surprisingly small. MAGMA code used for our experiments with Modeling~\ref{modeling: quadBSD_Fq} can be found at \href{https://github.com/rexos/phd-cryptography-code/blob/main/modelings/Modeling5.magma}{this link}.

\begin{table}[h]
\resizebox{\textwidth}{!}{%
\begin{tabular}{|c|c|c|c|c|c|c|c|c|c|c|c|c|c|c|c|c|c|c|}
\hline 
\cellcolor{gray!20!} & \cellcolor{gray!20!} &\cellcolor{gray!20!}  & \cellcolor{gray!20!}  & \multicolumn{5}{|c|}{\cellcolor{gray!20!}$q=7$} & \multicolumn{5}{|c|}{\cellcolor{gray!20!}$q=16,17$}& \multicolumn{5}{|c|}{\cellcolor{gray!20!}$q=127$}\\
\hline
\cellcolor{gray!20!}$n$  & \cellcolor{gray!20!}$k$  &\cellcolor{gray!20!} $t$  & \cellcolor{gray!20!}\#lin  & \cellcolor{gray!20!}$r_1$  & \cellcolor{gray!20!}$r_2$ & \cellcolor{gray!20!}\#quad & \cellcolor{gray!20!}\#vars  & \cellcolor{gray!20!}$d_{\mathrm{M}}$  & \cellcolor{gray!20!}$r_1$  & \cellcolor{gray!20!}$r_2$ & \cellcolor{gray!20!}\#quad & \cellcolor{gray!20!}\#vars  & \cellcolor{gray!20!}$d_{\mathrm{M}}$  & \cellcolor{gray!20!}$r_1$  & \cellcolor{gray!20!}$r_2$ & \cellcolor{gray!20!}\#quad & \cellcolor{gray!20!}\#vars  & \cellcolor{gray!20!}$d_{\mathrm{M}}$ \\ 
\hline
 10 & 5 & 2 & 5 & \multicolumn{5}{|c|}{$m=2$} & \multicolumn{5}{|c|}{$m=1$} & \multicolumn{5}{|c|}{$m=1$} \\
 \cline{5-19}
 &&&& 2 & 1 & 30& 30& 4  & 1 & 1 & 30 & 30 & 4 & 1  & 1 & 30 & 30 & 4 \\
 &&&& 1 & 2 & 40& 40& 3  &&&&& &&&&& \\ \hline
 15 & 9 & 3 & 6 & \multicolumn{5}{|c|}{$m=2$} & \multicolumn{5}{|c|}{$m=1$} & \multicolumn{5}{|c|}{$m=1$} \\
 \cline{5-19}
 &&&& 2 & 1 & 45& 45& 4 &  1 & 1 & 45 & 45 & 4  &  1 & 1 & 45 & 45 & 4  \\
 &&&& 1 & 2 & 60& 60& 4  &&&&& &&&&& \\ \hline
 19 & 10 & 5 & 9 & \multicolumn{5}{|c|}{$m=2$} & \multicolumn{5}{|c|}{$m=1$} & \multicolumn{5}{|c|}{$m=1$} \\
 \cline{5-19}
 &&&& 2 & 1 & 57& 57& 5  & 1 & 1 & 57 & 57 & 5 &  1 & 1 & 57 & 57 & 5 \\
 &&&& 1 & 2 & 76& 76& 4  &&&&& &&&&& \\ \hline
 22 & 14 & 4 & 8 & \multicolumn{5}{|c|}{$m=2$} & \multicolumn{5}{|c|}{$m=2$} & \multicolumn{5}{|c|}{$m=1$} \\
 \cline{5-19}
 &&&& 2 & 1 & 66& 66& 5 &  2 & 1 & 66& 66& 5 &  1 & 1 & 66 & 66 & 4 \\
 &&&& 1 & 2 & 88& 88 & 4  & 1 & 2 & 88& 88 & 4  &&&&& \\ \hline
 30 & 20 & 4 & 10 & \multicolumn{5}{|c|}{$m=3$} & \multicolumn{5}{|c|}{$m=2$} & \multicolumn{5}{|c|}{$m=1$} \\
 \cline{5-19}
 &&&& 3 & 1 & 90& 90& $\ge 6$  & 2 & 1 & 90& 90& $\ge 6$   &  1 & 1 & 90 & 90 & $\ge 6$ \\
 &&&& 2 & 2 & 120& 120 & 4  & 1 & 2 & 120& 120 & 5  &&&&& \\
 &&&& 1 & 3 & 150& 150& 4  &&&&&& &&&&\\ \hline
\end{tabular}
}
\vspace{2mm}
\caption{This table gives information from experiments using random $\FF_q$-linear codes using Modeling~\ref{modeling: quadBSD_Fq}.
The values in the 
$d_{\mathrm{M}}$ 
column represent the highest step degree achieved when directly computing the Gröbner basis of the system in MAGMA. 
The column ``\#lin'' denotes the number of linear equations, i.e. of parity-check equations, which is independent from the field size. The columns ``\#quad'' and ``\#vars'' stand for the number of quadratic equations and the number of variables, which depend on the value $r_2$ instead. The integer $r_1$ is the extension field degree over which the equations are defined. We recall that the value $m$ leads to different possible choices of $(r_1,r_2)$ and we give all minima with respect to the standard partial order on pairs.}
\label{Tab:qNEQ2-SolveDeg}
\end{table}

   \section{Conclusion and Future Directions} 

We have presented a new algebraic cryptanalysis for both the bounded and the exact versions of the Syndrome Decoding problem. 

In the binary case, our modelings significantly improved the previous attempt of \cite{2021/meneghetti}, by capturing the weight condition on the solution vector with quadratic polynomials. We have also experimentally shown that the behavior of the associated Gr\"obner basis is very different from that of a random system with the same number of variables and equations, \textit{leading to a much better complexity}. We have thus taken an important step towards making algebraic algorithms potentially competitive for the decoding problem.

We introduced algebraic modelings for the first time in the case of the general syndrome decoding problem over larger finite fields. Notably, one of them is quadratic with a number of variables and equations that is linear or quasi-linear in the code length, \textit{independently from the field size}. We have analyzed that, despite the constant degree of the equations involved, the system correctly solves the decoding problem and with high probability does not have spurious solutions for all parameters that are relevant to the problem.

\vskip 0.5cm
An open question to this work is to understand more clearly the behavior of the Gr\"obner basis computation both in the binary and in the general finite field cases and to get a theoretical estimate of the complexity that better matches with the one obtained from the experiments. This is a difficult task, as it is often the case for very structured algebraic systems, and probably requires to develop dedicated tools to analyze such behavior.

Another interesting and natural follow-up to this work can be to analyze the impact of hybrid strategies on solving the proposed systems. Since the weight of the solution sought is relatively low, a convenient choice is to set most of the variables of $X$ to 0. It is not difficult to see that this approach is reminiscent of the guess part in Prange or later ISD algorithms.

In the case of binary systems, we have verified experimentally that the best hybrid trade-off actually boils down to the Prange algorithm, the best complexity being indeed obtained when enough zeros to linearize the system are guessed.

However, the system hides a lot of structure and offers many different ways to specialize variables. For example, the auxiliary variables from the vector $\vy$ can also be fixed and they too have different probabilities of taking a value equal to 0 or 1. It is therefore not at all unrealistic to speculate that an ad-hoc and smart hybridization technique may lead to a better trade-off than a fully combinatorial approach.

\subsection*{Acknowledgments.}

We would like to thank the reviewers for their detailed and valuable feedback. Additionally, special thanks to Magali Bardet, Tanja Lange and Alberto Ravagnani for the fruitful discussions and insights.

This publication was created with the co-financing of the European Union FSE-REACT-EU, PON Research and Innovation 2014-2020 DM1062/2021. A. Caminata is supported by the PRIN 2020 grant 2020355B8Y ``Squarefree Gr\"obner degenerations, special varieties and related topics'', by the PRIN PNRR 2022 grant P2022J4HRR ``Mathematical Primitives for Post Quantum Digital Signatures'', by the MUR Excellence Department Project awarded to Dipartimento di Matematica, Università di Genova, CUP D33C23001110001, and by the European Union within the program NextGenerationEU. 
A. Meneghetti acknowledges support from Ripple's University Blockchain Research Initiative.
A. Caminata and A. Meneghetti are members of the INdAM Research Group GNSAGA.

    \bibliography{biblio.bib,crypto,abbrev0}

\begin{thebibliography}{10}
\providecommand{\url}[1]{\texttt{#1}}
\providecommand{\urlprefix}{URL }
\providecommand{\doi}[1]{https://doi.org/#1}

\bibitem{GBwalk2}
Amrhein, B., Gloor, O., Küchlin, W.: On the walk. Theoretical Computer Science
   \textbf{187}(1),  179--202 (1997)

\bibitem{aragon2022bike}
Aragon, N., Barreto, P., Bettaieb, S., Bidoux, L., Blazy, O., Deneuville, J.C.,
  Gaborit, P., Ghosh, S., Gueron, S., G{\"u}neysu, T., et~al.: {B}{I}{K}{E}:
  bit flipping key encapsulation  (2022)

\bibitem{B04}
Bardet, M.: {\'E}tude des syst{\`e}mes alg{\'e}briques surd{\'e}termin{\'e}s.
  Applications aux codes correcteurs et {\`a} la cryptographie. Ph.D. thesis,
  Universit{\'e} Pierre et Marie Curie-Paris VI (2004)

\bibitem{bardet2020algebraic}
Bardet, M., Briaud, P., Bros, M., Gaborit, P., Neiger, V., Ruatta, O., Tillich,
  J.P.: An algebraic attack on rank metric code-based cryptosystems. In: Annual
  International Conference on the Theory and Applications of Cryptographic
  Techniques. pp. 64--93. Springer (2020)

\bibitem{DBLP:conf/asiacrypt/BardetBCGPSTV20}
Bardet, M., Bros, M., Cabarcas, D., Gaborit, P., Perlner, R.A., Smith{-}Tone,
  D., Tillich, J., Verbel, J.A.: Improvements of algebraic attacks for solving
  the rank decoding and minrank problems. In: Moriai, S., Wang, H. (eds.)
  Advances in Cryptology - {ASIACRYPT} 2020 - 26th International Conference on
  the Theory and Application of Cryptology and Information Security, Daejeon,
  South Korea, December 7-11, 2020, Proceedings, Part {I}. Lecture Notes in
  Computer Science, vol. 12491, pp. 507--536. Springer (2020)

\bibitem{bardet2004complexity}
Bardet, M., Faugere, J.C., Salvy, B.: On the complexity of {G}r{\"o}bner basis
  computation of semi-regular overdetermined algebraic equations. In:
  Proceedings of the International Conference on Polynomial System Solving. pp.
  71--74 (2004)

\bibitem{becker2012decoding}
Becker, A., Joux, A., May, A., Meurer, A.: Decoding random binary linear codes
  in $2^{n/20}$: How $1+1=0$ improves information set decoding. In: Advances in
  Cryptology--EUROCRYPT 2012: 31st Annual International Conference on the
  Theory and Applications of Cryptographic Techniques, Cambridge, UK, April
  15-19, 2012. Proceedings 31. pp. 520--536. Springer (2012)

\bibitem{berlekamp1978inherent}
Berlekamp, E., McEliece, R., Van~Tilborg, H.: On the inherent intractability of
  certain coding problems (corresp.). IEEE Transactions on Information theory
  \textbf{24}(3),  384--386 (1978)

\bibitem{bernstein2017classic}
Bernstein, D.J., Chou, T., Lange, T., von Maurich, I., Misoczki, R.,
  Niederhagen, R., Persichetti, E., Peters, C., Schwabe, P., Sendrier, N.,
  et~al.: Classic {M}c{E}liece: conservative code-based cryptography.
  ffhal-04288769  (2017)

\bibitem{bernstein2011smaller}
Bernstein, D.J., Lange, T., Peters, C.: Smaller decoding exponents:
  ball-collision decoding. In: Advances in Cryptology--CRYPTO 2011: 31st Annual
  Cryptology Conference, Santa Barbara, CA, USA, August 14-18, 2011.
  Proceedings 31. pp. 743--760. Springer (2011)

\bibitem{Bigdeli202175}
Bigdeli, M., De~Negri, E., Dizdarevic, M.M., Gorla, E., Minko, R., Tsakou, S.:
  Semi-regular sequences and other random systems of equations. Association for
  Women in Mathematics Series  \textbf{24},  75 – 114 (2021)

\bibitem{both2018decoding}
Both, L., May, A.: Decoding linear codes with high error rate and its impact
  for {LPN} security. In: International Conference on Post-Quantum
  Cryptography. pp. 25--46. Springer (2018)

\bibitem{2023/briaud}
Briaud, P., {\O}ygarden, M.: A new algebraic approach to the regular syndrome
  decoding problem and implications for {PCG} constructions. In: {EUROCRYPT}
  {(5)}. Lecture Notes in Computer Science, vol. 14008, pp. 391--422. Springer
  (2023)

\bibitem{1965/buchberger}
Buchberger, B.: Bruno {B}uchberger's {P}h{D} thesis 1965: An algorithm for
  finding the basis elements of the residue class ring of a zero dimensional
  polynomial ideal. J. Symb. Comput.  \textbf{41}(3–4),  475–511 (2006)

\bibitem{2021/caminatagorla}
Caminata, A., Gorla, E.: Solving multivariate polynomial systems and an
  invariant from commutative algebra. In: Arithmetic of finite fields, Lecture
  Notes in Comput. Sci., vol. 12542, pp. 3--36. Springer, Cham (2021)

\bibitem{CaminataG23}
Caminata, A., Gorla, E.: Solving degree, last fall degree, and related
  invariants. J. Symb. Comput.  \textbf{114},  322--335 (2023)

\bibitem{carrier2022statistical}
Carrier, K., Debris-Alazard, T., Meyer-Hilfiger, C., Tillich, J.P.: Statistical
  decoding 2.0: reducing decoding to {LPN}. In: International Conference on the
  Theory and Application of Cryptology and Information Security. pp. 477--507.
  Springer (2022)

\bibitem{carrier2024reduction}
Carrier, K., Debris-Alazard, T., Meyer-Hilfiger, C., Tillich, J.P.: Reduction
  from sparse {LPN} to {LPN}, dual attack 3.0. In: Annual International
  Conference on the Theory and Applications of Cryptographic Techniques. pp.
  286--315. Springer (2024)

\bibitem{GBwalk1}
Collart, S., Kalkbrener, M., Mall, D.: Converting bases with the {G}röbner
  walk. Journal of Symbolic Computation  \textbf{24}(3),  465--469 (1997)

\bibitem{XL00}
Courtois, N., Klimov, A., Patarin, J., Shamir, A.: Efficient algorithms for
  solving overdefined systems of multivariate polynomial equations. In:
  Advances in cryptology---{EUROCRYPT} 2000 ({B}ruges), Lecture Notes in
  Comput. Sci., vol.~1807, pp. 392--407. Springer, Berlin (2000)

\bibitem{cox1997ideals}
Cox, D., {L}ittle, J., {O}'shea, D., {S}weedler, M.: Ideals, {V}arieties, and
  {A}lgorithms, vol.~3. Springer (1997)

\bibitem{2013/dingschmidt}
Ding, J., Schmidt, D.: Solving degree and degree of regularity for polynomial
  systems over a finite fields. Lecture Notes in Computer Science
  \textbf{8260},  34 – 49 (2013)

\bibitem{D89}
Dumer, I.: Two decoding algorithms for linear codes  \textbf{25}(1),  17--23
  (1989)

\bibitem{PKC:EssBel22}
Esser, A., Bellini, E.: Syndrome decoding estimator. pp. 112--141 (2022).
  \doi{10.1007/978-3-030-97121-2_5}

\bibitem{FGLM93}
Faug\`ere, J.C., Gianni, P., Lazard, D., Mora, T.: Efficient computation of
  zero-dimensional {G}r\"{o}bner bases by change of ordering. J. Symbolic
  Comput.  \textbf{16}(4),  329--344 (1993)

\bibitem{faugereF4}
Faug\`ere, J.C.: A new efficient algorithm for computing {G}r\"{o}bner bases
  {$(F_4)$}. vol.~139, pp. 61--88 (1999), effective methods in algebraic
  geometry (Saint-Malo, 1998)

\bibitem{F5paper}
Faug\`ere, J.C.: A new efficient algorithm for computing {G}röbner bases
  without reduction to zero ({F}5). p. 75 – 83 (2002)

\bibitem{finiasz2009security}
Finiasz, M., Sendrier, N.: Security bounds for the design of code-based
  cryptosystems. In: Advances in Cryptology--ASIACRYPT 2009: 15th International
  Conference on the Theory and Application of Cryptology and Information
  Security, Tokyo, Japan, December 6-10, 2009. Proceedings 15. pp. 88--105.
  Springer (2009)

\bibitem{LN94}
Lidl, R., Niederreiter, H.: Introduction to Finite Fields and their
  Applications. Cambridge University Press, 2 edn. (1994)

\bibitem{may2011decoding}
May, A., Meurer, A., Thomae, E.: Decoding random linear codes in
  {$O(2^{0.054n})$}. In: Lee, D.H., Wang, X. (eds.) ~2011. vol.~7073, pp.
  107--124. Springer (2011)

\bibitem{may2015computing}
May, A., Ozerov, I.: On computing nearest neighbors with applications to
  decoding of binary linear codes. In: Annual International Conference on the
  Theory and Applications of Cryptographic Techniques. pp. 203--228. Springer
  (2015)

\bibitem{mceliece1978public}
McEliece, R.J.: A public-key cryptosystem based on algebraic coding theory.
  Coding Thv  \textbf{4244},  114--116 (1978)

\bibitem{melchor2018hamming}
Melchor, C.A., Aragon, N., Bettaieb, S., Bidoux, L., Blazy, O., Deneuville,
  J.C., Gaborit, P., Persichetti, E., Z{\'e}mor, G., Bourges, I.: Hamming
  quasi-cyclic ({HQC}). NIST PQC Round  \textbf{2}(4), ~13 (2018)

\bibitem{2021/meneghetti}
Meneghetti, A., Pellegrini, A., Sala, M.: On the equivalence of two
  post-quantum cryptographic families. Annali di Matematica Pura ed Applicata
  (1923 -)  \textbf{202},  967--991 (2023)

\bibitem{DBLP:journals/tit/Prange62}
Prange, E.: The use of information sets in decoding cyclic codes. {IRE} Trans.
  Inf. Theory  \textbf{8}(5), ~5--9 (1962)

\bibitem{2023/salizzoni}
Salizzoni, F.: An upper bound for the solving degree in terms of the degree of
  regularity. arXiv:2304.13485  (2023)

\bibitem{Semaev2021651}
Semaev, I., Tenti, A.: Probabilistic analysis on {M}acaulay matrices over
  finite fields and complexity of constructing {G}röbner bases. Journal of
  Algebra  \textbf{565},  651 – 674 (2021)

\bibitem{stern1989method}
Stern, J.: A method for finding codewords of small weight. In: Coding Theory
  and Applications: 3rd International Colloquium Toulon, France, November 2--4,
  1988 Proceedings 3. pp. 106--113. Springer (1989)

\end{thebibliography}
    \bibliographystyle{splncs04}

\appendix
\section{Section~\ref{Sec:Dim_of_Var} Bounds on the zero-dimensionality}\label{app:Dim_Exp}

\begin{table}[H]
\centering
\resizebox{0.9\textwidth}{!}{%
\begin{tabular}{|c ||c c c c||} 
\hline
$[n,k]_q$& $t=\lfloor (d_{GV}-1)/2\rfloor$ & $t=\lfloor (n-k)/2\rfloor$ & $t=d_{GV}$ & $t=d_{GV}+1$ \\ \hline
$[100,50]_{2}$ & \makecell{$t=5,$ \\ $\pr\le  2.32\cdot 10^{-20} $} &  \makecell{$t= 25,$ \\ $\pr\le  1 $} &  \makecell{$t=12 ,$ \\ $\pr\le  6.73\cdot 10^{-9} $} &  \makecell{$t=13 ,$ \\ $\pr\le  1.84\cdot 10^{-7} $} \\ \hline
$[100,50]_{7}$ &  \makecell{$t= 12 , $ \\ $ \pr\le 8.44\cdot 10^{-51} $} &  \makecell{$t= 25 , $ \\ $ \pr\le  1.89 \cdot 10^{-20} $} &  \makecell{$t= 25 , $ \\ $ \pr\le  1.89 \cdot 10^{-20} $} &  \makecell{$t= 26 , $ \\ $ \pr\le  2.68 \cdot 10^{-18}$}  \\ \hline
$[100,50]_{127}$ &  \makecell{$t= 18 , $ \\ $ \pr\le  5.48 \cdot 10^{-118} $} &  \makecell{$t= 25 , $ \\ $ \pr\le  1.23 \cdot 10^{-84} $} &  \makecell{$t= 37 , $ \\ $ \pr\le  5.38 \cdot 10^{-30} $} &  \makecell{$t= 38 , $ \\ $ \pr\le  1.44 \cdot 10^{-25} $}  \\ \hline
$[100,80]_{2}$ &  \makecell{$t= 1 , $ \\ $ \pr\le  9.09 \cdot 10^{-11} $} &  \makecell{$t= 10 , $ \\ $ \pr\le 1 $} &  \makecell{$t= 4 , $ \\ $ \pr\le  3.14\cdot 10^{-4} $} &  \makecell{$t= 5 , $ \\ $ \pr\le  0.0268$}  \\ \hline
$[100,80]_{7}$ &  \makecell{$t= 3 , $ \\ $ \pr\le  4.06 \cdot 10^{-25} $} &  \makecell{$t= 10 , $ \\ $ \pr\le  2.92 \cdot 10^{-5} $} &  \makecell{$t= 8 , $ \\ $ \pr\le  1.30 \cdot 10^{-10} $} &  \makecell{$t= 9 , $ \\ $ \pr\le  6.55 \cdot 10^{-8} $}  \\ \hline
$[100,80]_{127}$ &  \makecell{$t= 6 , $ \\ $ \pr\le  1.16 \cdot 10^{-52} $} &  \makecell{$t= 10 , $ \\ $ \pr\le  1.14 \cdot 10^{-31} $} &  \makecell{$t= 13 , $ \\ $ \pr\le  1.97 \cdot 10^{-16} $} &  \makecell{$t= 14 , $ \\ $ \pr\le  1.97 \cdot 10^{-11} $}  \\ \hline
\end{tabular}
}
\vspace{2mm}
\caption{Bound on the probability $\pr$ that the ideal associated with the system has a strictly positive dimension for the decoding problem with a randomly sampled syndrome.}
\label{table: bound_SD}
\end{table} \vspace{-.5in}

\begin{table}[H]
\centering
\resizebox{0.8\textwidth}{!}{%
\begin{tabular}{|c ||c c c c c||} 
\hline
$[n,k]_q$& $t=\lfloor (d_{GV}-1)/2\rfloor$ & $t=\lfloor (n-k)/2\rfloor$ & $t=d_{GV}-2$ & $t=d_{GV}-1$  & $t=d_{GV}$\\ \hline
$[ 100 , 50 ]_{ 2 }$ & \makecell{$t= 5 ,$ \\ $ \pr\le  2.07\cdot 10^{-6} $} & \makecell{$t= 25 ,$ \\ $ \pr\le  1 $} & \makecell{$t= 10 ,$ \\ $ \pr\le  1 $} & \makecell{$t= 11 ,$ \\ $ \pr\le  1 $} & \makecell{$t= 12 ,$ \\ $ \pr\le 1 $} \\ \hline
$[ 100 , 50 ]_{ 7 }$ & 
\makecell{$t= 12 ,$ \\ $ \pr\le  8.08\cdot 10^{-18} $} & \makecell{$t= 25 ,$ \\ $ \pr\le  1 $} & \makecell{$t= 23 ,$ \\ $ \pr\le  0.378 $} & \makecell{$t= 24 ,$ \\ $ \pr\le  1 $} &
\makecell{$t= 25 ,$ \\ $ \pr\le  1 $} \\ \hline
$[ 100 , 50 ]_{ 127 }$ & \makecell{$t= 18 ,$ \\ $ \pr\le  1.46\cdot 10^{-48} $} & \makecell{$t= 25 ,$ \\ $ \pr\le  6.16\cdot 10^{-30} $} & \makecell{$t= 35 ,$ \\ $ \pr\le  3.04\cdot 10^{-5} $} & \makecell{$t= 36 ,$ \\ $ \pr\le  0.00696 $} &
\makecell{$t= 37 ,$ \\ $ \pr\le  1 $} \\ \hline
$[ 100 , 80 ]_{ 2 }$ & \makecell{$t= 1 ,$ \\ $ \pr\le  9.54\cdot 10^{-5} $} & \makecell{$t= 10 ,$ \\ $ \pr\le  1 $} & \makecell{$t= 2 ,$ \\ $ \pr\le  0.0142 $} & \makecell{$t= 3 ,$ \\ $ \pr\le  1 $} & \makecell{$t= 4 ,$ \\ $ \pr\le  1 $} \\ \hline

$[ 100 , 80 ]_{ 7 }$ & \makecell{$t= 3 ,$ \\ $ \pr\le  6.93\cdot 10^{-10} $} & \makecell{$t= 10 ,$ \\ $ \pr\le  1 $} & \makecell{$t= 6 ,$ \\ $ \pr\le  0.00176 $} & \makecell{$t= 7 ,$ \\ $ \pr\le  0.165 $} & 
\makecell{$t= 8 ,$ \\ $ \pr\le  1 $} \\ \hline
$[ 100 , 80 ]_{ 127 }$ & 
\makecell{$t= 6 ,$ \\ $ \pr\le  4.20\cdot 10^{-21} $} & \makecell{$t= 10 ,$ \\ $ \pr\le  1.59\cdot 10^{-8} $} & \makecell{$t= 11 ,$ \\ $ \pr\le  1.65\cdot 10^{-5} $} & \makecell{$t= 12 ,$ \\ $ \pr\le  0.0155$} & \makecell{$t= 13 ,$ \\ $ \pr\le  1 $} \\ \hline
\end{tabular}
}
\vspace{2mm}
\caption{Bound on the probability $\pr$ that the ideal associated with the system has strictly positive dimension for the codeword finding problem (i.e. with null syndrome).}
\label{table: bound_CF}
\end{table}

\end{document}